\documentclass[11pt]{article}
\usepackage{epsfig}
\usepackage{amssymb,amsmath,amsthm,url}
\usepackage{multirow}
\usepackage{booktabs}
\usepackage{natbib}
\usepackage{setspace}

\usepackage{times,amsfonts,eucal}
\usepackage{algorithm}
\usepackage{algorithmic}
\usepackage{graphicx,ifthen}
\usepackage{wrapfig}
\usepackage{latexsym}
\usepackage{color}
\usepackage{mathrsfs}
\usepackage{bm}
\usepackage{bbm}
\usepackage{srctex}

\usepackage{bm}
\usepackage{natbib}

\newtheorem{theorem}{Theorem}[section]

\newtheorem{cor}{Corollary}[section]
\newtheorem{definition}{Definition}

\newtheorem{assu}{Assumption}[section]
\numberwithin{equation}{section}
\numberwithin{theorem}{section}
\numberwithin{lemma}{section}
\numberwithin{pro}{section}
\numberwithin{cor}{section}
\numberwithin{definition}{section}
\numberwithin{cons}{section}
\numberwithin{rem}{section}
\numberwithin{exa}{section}
\numberwithin{table}{section}
\numberwithin{figure}{section}

\newcommand{\vare}{\varepsilon}

\newcommand{\by}{\mathbf{y}}
\newcommand{\bs}{\mathbf{s}}
\newcommand{\bd}{\mathbf{d}}
\newcommand{\bA}{\mathbf{A}}
\newcommand{\bB}{\mathbf{B}}
\newcommand{\bH}{\mathbf{H}}
\newcommand{\bJ}{\mathbf{J}}
\newcommand{\bI}{\mathbf{I}}
\newcommand{\bN}{\mathbf{N}}
\newcommand{\bQ}{\mathbf{Q}}
\newcommand{\bR}{\mathbf{R}}
\newcommand{\bV}{\mathbf{V}}
\newcommand{\bzero}{\mathbf{0}}
\newcommand{\la}{\langle}
\newcommand{\ra}{\rangle}
\newcommand{\varp}{\varphi}
\newcommand{\btheta}{\mathbf{\theta}}
\newcommand{\bTheta}{\mathbf{\Theta}}
\newcommand{\ubA}{\underline{\bA}}
\newcommand{\ubB}{\underline{\bB}}
\newcommand{\ubd}{\underline{\bd}}
\newcommand{\utheta}{\underline{\theta}}
\newcommand{\ubtheta}{\underline{\btheta}}

\newcommand{\T}{\top}
\newcommand\cC{{\mathcal C}}
\newcommand{\cN}{{\mathcal N}}

\newcommand{\intt}{\int\hspace{-.2cm}\int}
\newcommand{\inttt}{\int\hspace{-.2cm}\int\hspace{-.2cm}\int }

\def\beq{\begin{equation}}
\def\eeq{\end{equation}}
\def\bals{\begin{align*}}
\def\eals{\end{align*}}

\def\bal{\begin{align}}
\def\eal{\end{align}}


\pagetotal=29.7cm \textwidth=16.8cm \textheight=23cm
\topmargin=-2cm \headheight=0.3cm \headsep=1.6cm
\oddsidemargin=0cm \evensidemargin=0cm \arraycolsep=2pt
\columnsep=0.60cm

\allowdisplaybreaks

\begin{document}

\title{Functional generalized autoregressive conditional heteroskedasticity\footnote{This research was partially supported by NSF grants DMS 1209226, DMS 1305858 and DMS 1407530
}}

\author{
Alexander Aue\footnote{Department of Statistics, University of California, Davis, CA 95616, USA, email: \tt{aaue@ucdavis.edu}}
\and Lajos Horv\'ath\footnote{Department of Mathematics, University of Utah, Salt Lake City, UT 84112, USA, emails: \tt{horvath@math.utah.edu}, {\tt danielpellatt@gmail.com}}
\and Daniel F.\ Pellatt$^\ddagger$
}
\date{\today}
\maketitle

\begin{abstract}
\setlength{\baselineskip}{1.8em}
Heteroskedasticity is a common feature of financial time series and is commonly addressed in the model building process through the use of ARCH and GARCH processes. More recently multivariate variants of these processes have been in the focus of research with attention given to methods seeking an efficient and economic estimation of a large number of model parameters. Due to the need for estimation of many parameters, however, these models may not be suitable for modeling now prevalent high-frequency volatility data. One potentially useful way to bypass these issues is to take a functional approach. In this paper, theory is developed for a new functional version of the generalized autoregressive conditionally heteroskedastic process, termed fGARCH. The main results are concerned with the structure of the fGARCH(1,1) process, providing criteria for the existence of strictly stationary solutions both in the space of square-integrable and continuous functions. An estimation procedure is introduced and its consistency and asymptotic normality verified. A small empirical study highlights potential applications to intraday volatility estimation. 
\medskip \\
\noindent {\bf Keywords:} Econometrics; Financial time series; Functional data; GARCH processes; Stationary solutions

\noindent {\bf MSC 2010:} Primary: 62M10, Secondary: 62P20, 91B84
\end{abstract}

\setlength{\baselineskip}{1.8em}


\section{Introduction}
\label{sec:intro}

Modeling volatility is one of the prime objectives of financial time series analysis. Research in the area has increased rapidly since the seminal contributions by Engle (1982), who introduced the autoregressive conditional heteroskedastic, ARCH, model, and Bollerslev (1986), who introduced the generalized ARCH, GARCH, model to deal with non-constant and randomly changing volatilities for univariate time series. Methodology based on these processes and their modifications have become major theoretical and applied tools to analyze stock returns, exchange rates and more. The book Francq and Zako\"ian (2010) provides an excellent introduction to volatility processes and their applications. The interested reader may find detailed accounts on theoretical properties such as conditions for the existence of stationary solutions and finiteness of moments as well as methodology for the estimation of model parameters. 

With ever more data being collected in today's financial sector, there has come a need to address new phenomena observed when dealing with high-frequency, intra-day observations. While multivariate volatility processes such as the ones of Engle (2002) and Engle and Kroner (1995) might be used to model high-frequency returns, these models are typically associated with the difficult task of estimating a large number of parameters (see Chapter 11 of Francq and Zako\"ian, 2010). Surveys of multivariate GARCH models and related volatility processes may be found in Bauwens et al.\ (2006) and Silvennoinen and Ter\"{a}svirta (2009). To alleviate the numerical burden of the parameter estimation, the variance targeting method of Engle and Mezrich (1996) can be used, which was recently extended to cover multivariate GARCH models by Francq et al.\ (2011, 2015). These methods, however, may still fail to produce satisfactory results in a high-frequency setting.

Modeling data as a collection of functions was popularized through the work of Ramsay and Silvermann (2005). While this approach has now started to become relevant for the analysis of high-frequency volatility data, much of the research in the area has so far been devoted to homoskedastic functional time series. The books by Bosq (2010) and Horv\'ath and Kokoszka (2012) provide an account of the state-of-the-art research on dependent sequences of random functions, including functional autoregressive processes. More recently, prediction of functional time series was considered in Aue et al.\ (2015), while estimation in dependent functional linear models was discussed in Aue et al.\ (2014). 

A first step to introduce a functional heteroskedastic framework in the tradition of Engle (1982) was undertaken in H\"{o}rmann et al. (2013). These authors found conditions for the existence of functional ARCH(1) processes, for which the conditional variance depends on the whole (intra-day) path of the previous observation. In the spirit of Bollerslev (1986), this paper introduces a functional model in which the conditional volatility function of the present observation is given as a functional linear combination of the (intra-day) paths of both the past squared observation and its volatility function. The following definition is central.

\begin{definition} A sequence of random functions $(y_i\colon i\in\mathbb{Z})$ is called a functional GARCH process of orders (1,1), abbreviated by fGARCH(1,1), if it satisfies the equations
\begin{align}
\label{def-1}
y_i&=\sigma_i\vare_i, \\
\label{def-2}
\sigma_i^2&=\delta+\alpha y_{i-1}^2+\beta\sigma_{i-1}^2,
\end{align}
where $\delta$ is a nonnegative function and the integral operators $\alpha$ and $\beta$ map nonnegative functions to nonnegative functions, which is in short notation referred to as
\beq\label{cond-1}
\delta\geq 0,\quad \alpha\geq 0,\quad\beta\geq 0.
\eeq
Further, the innovations 
 \begin{align}\label{cond-2}
(\vare_i\colon i\in\mathbb{Z})
\quad\mbox{are independent and identically distributed random functions.}
\end{align}
\end{definition}
\noindent Additional assumptions will be introduced were needed. Note that the integral operators $\alpha$ and $\beta$ are, for $t\in[0,1]$, defined by $(\alpha x)(t)=\int \alpha(t,s)x(s)ds$ and $(\beta x)(t)=\int \beta(t,s)x(s)ds$, where $x$ is an arbitrary element of $L^2[0,1]$ and $\int$ is used to mean $\int_0^1$. The integral kernel functions $\alpha(s,t)$ and $\beta(s,t)$ are elements of the Hilbert space $L^2[0,1]^2$ and consequently bounded. 

The model given by \eqref{def-1} and \eqref{def-2} exhibits two time variables. The first is labeled by the integer $i$ and will in the present context often refer to trading day $i$, even though other time units are possible. The second time variable is labeled by the real-valued $t$ which, without loss of generality, takes values in the unit interval $[0,1]$. With $t$, intra-day trading time is parameterized. From this set-up, it should become clear that a modeling of intra-day seasonality may be accommodated by the fGARCH model. Doing this in a functional framework, may lead to efficiency gains over a multivariate modeling approach if the underlying process is sufficiently smooth. Similar to the univariate case argued in Bollerslev (1986), \eqref{def-2} should lead to a more flexible modeling of the functional volatility lag structure compared to the fARCH(1) process of H\"ormann et al.\ (2013). The focus is here on the first-order (1,1) case. Even in the univariate case, this is the most widely used GARCH process, used as default in the financial industry because it tends to work well (see, for example, the overview article by Zivot, 2009). Similar arguments should also apply in the more complex functional world.

The paper proceeds as follows. Theoretical properties of the fGARCH(1,1) process defined via \eqref{def-1} and \eqref{def-2} are discussed in Section \ref{sec:structure}. This section contains the main contributions of this work, establishing conditions for the existence of the functional stochastic difference equations \eqref{def-1} and \eqref{def-2}. Since one may view the functions $y_i$ and $\sigma_i^2$ both as members of the space of square-integrable and continuous functions, two different theorems are stated to cover both cases. The results on the structure of the fGARCH(1,1) process are accompanied by additional results on the estimation of quantities appearing in \eqref{def-2}. These are given in Section \ref{sec:estimation}. Empirical aspects are highlighted in Section \ref{sec:empirical}. Proofs of the theoretical results are given in Section~\ref{sec:proofs}.

\section{Structure}
\label{sec:structure}

This section discusses properties of the fGARCH(1,1) process, in particular notions of unique, strictly stationary solutions to the defining equations \eqref{def-1} and \eqref{def-2}. Two different settings are of interest: solutions in $L^2=L^2[0,1]$, the space of square integrable functions on $[0,1]$, and in $\cC[0,1]$, the space of continuous functions on $[0,1]$. In both cases, solutions can be characterized through properties of the random integral kernel 
\[
\gamma_0(t,s)=\alpha(t,s)\vare_0^2(s)+\beta(t,s).
\]
By assumption, $\gamma_0(s,t)$ is an element of $L^2[0,1]^2$ with probability one and, following Riesz and Sz.--Nagy (1990, p.\ 148), gives rise to the random integral operator $\gamma_0$ defined through $(\gamma x)(t)=\int\gamma_0(s,t)x(s)ds$. Note that $\gamma_0\geq 0$ (it maps nonnegative functions to nonnegative functions) and denote its (random) Hilbert--Schmidt norm by
\[
\|\gamma_0\|_\mathcal{S}=\bigg(\intt\gamma_0^2(t,s)dtds\bigg)^{1/2}.
\]
In the $L^2$ case and for the fGARCH(1,1), the Hilbert--Schmidt norm $\|\gamma_0\|_\mathcal{S}$ will play the role of the usual top Lyapunov exponent when characterizing the existence of stationary solutions to random difference equations (see, for example, Boucherol and Picard, 1992). One may also observe the similarity to the univariate GARCH(1,1) case discussed in Nelson (1990). Additional information on properties of function spaces may be found in the monograph Bosq (2000).
 
\begin{theorem}\label{l-2-th} 
Assume that \eqref{cond-1} and \eqref{cond-2} hold, that $\delta\in L^2$, and that $\vare_0\in L^4$ with probability one.

(i) If
\beq\label{cond-3}
-\infty\leq \mathbb{E}[\log\|\gamma_0\|_\mathcal{S}]
<0,
\eeq
then \eqref{def-1} and \eqref{def-2} have a unique, strictly stationary and nonanticipative solution in $L^2$.

(ii) If there is $\nu>0$ such that
\beq\label{cond-4}
\mathbb{E}[\|\gamma_0\|_\mathcal{S}^\nu]<1,
\eeq
then \eqref{def-1} and \eqref{def-2} have a unique, strictly stationary and nonanticipative solution in $L^2$ and $\mathbb{E}[\|\sigma_0^2\|_2^{\nu}]<\infty$.
\end{theorem}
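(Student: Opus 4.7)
The plan is to reduce the fGARCH(1,1) defining equations to a single stochastic recursion for $\sigma_i^2$ in $L^2$ and then build the solution as an almost surely convergent series. Substituting $y_{i-1}^2(s)=\sigma_{i-1}^2(s)\vare_{i-1}^2(s)$ into \eqref{def-2} and combining the two integral operators yields
\[
\sigma_i^2=\delta+\gamma_{i-1}\sigma_{i-1}^2,\qquad \gamma_i(t,s)=\alpha(t,s)\vare_i^2(s)+\beta(t,s),
\]
where $\{\gamma_i\}$ is an i.i.d.\ sequence of nonnegative Hilbert--Schmidt integral operators on $L^2$ (use $\vare_0\in L^4$ and $\alpha,\beta\in L^2[0,1]^2$ to check $\mathbb{E}\|\gamma_0\|_\mathcal{S}^2$-type integrability of the kernel). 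Iterating the recursion suggests the candidate
\[
\sigma_i^2=\delta+\sum_{k=1}^{\infty}\gamma_{i-1}\gamma_{i-2}\cdots\gamma_{i-k}\,\delta,
\]
measurable with respect to $\{\vare_j:j\le i-1\}$, so nonanticipativity and (by i.i.d.\ innovations) strict stationarity will be automatic once convergence is established.

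For part (i), I would prove a.s.\ convergence of the series by a top-Lyapunov-exponent argument. The Hilbert--Schmidt norm is submultiplicative, so $X_{m,n}=\log\|\gamma_{-m-1}\cdots\gamma_{-n}\|_\mathcal{S}$ is subadditive and stationary, which brings Kingman's subadditive ergodic theorem (in the i.i.d.\ Furstenberg--Kesten form) into play: $\tfrac{1}{k}\log\|\gamma_{i-1}\cdots\gamma_{i-k}\|_\mathcal{S}\to\lambda$ a.s., with $\lambda\le \mathbb{E}[\log\|\gamma_0\|_\mathcal{S}]<0$ under \eqref{cond-3}. The root test then yields a.s.\ summability of $\|\gamma_{i-1}\cdots\gamma_{i-k}\delta\|_2\le\|\gamma_{i-1}\cdots\gamma_{i-k}\|_\mathcal{S}\|\delta\|_2$, so the proposed series converges in $L^2$ a.s.\ and defines a nonnegative function (since each factor is a nonnegative operator). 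A direct substitution confirms it solves the recursion, and pairing with $y_i=\sigma_i\vare_i$ gives \eqref{def-1}--\eqref{def-2}. For uniqueness, if $\tilde\sigma_i^2$ is any other strictly stationary nonanticipative $L^2$ solution, the difference $\Delta_i=\tilde\sigma_i^2-\sigma_i^2$ satisfies $\Delta_i=\gamma_{i-1}\Delta_{i-1}=\cdots=\gamma_{i-1}\cdots\gamma_{i-k}\Delta_{i-k}$; by strict stationarity $\Delta_{i-k}$ is tight, while the operator norm decays exponentially on the exceptional-null-set complement, forcing $\Delta_i=0$ a.s.

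For part (ii), observe first that $\mathbb{E}[\|\gamma_0\|_\mathcal{S}^\nu]<1$ implies $\mathbb{E}[\log\|\gamma_0\|_\mathcal{S}]\le \nu^{-1}\log\mathbb{E}[\|\gamma_0\|_\mathcal{S}^\nu]<0$ by Jensen, so part (i) already delivers the unique strictly stationary nonanticipative $L^2$ solution. To bound its $\nu$-th moment, combine submultiplicativity with independence of the $\gamma_{-j}$'s to obtain
\[
\mathbb{E}\bigl[\|\gamma_{-1}\cdots\gamma_{-k}\|_\mathcal{S}^{\nu}\bigr]\le \bigl(\mathbb{E}\|\gamma_0\|_\mathcal{S}^{\nu}\bigr)^{k}=r^k,\qquad r<1.
\]
For $\nu\le 1$ the subadditivity $(a+b)^\nu\le a^\nu+b^\nu$ gives $\|\sigma_0^2\|_2^\nu\le \|\delta\|_2^\nu\bigl(1+\sum_{k\ge 1}\|\gamma_{-1}\cdots\gamma_{-k}\|_\mathcal{S}^\nu\bigr)$, and the geometric bound is then immediate. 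For $\nu>1$ the same conclusion follows by Minkowski's inequality applied to the series representation, using $\mathbb{E}[\|\gamma_{-1}\cdots\gamma_{-k}\|_\mathcal{S}^\nu]^{1/\nu}\le r^{k/\nu}$.

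The main obstacle I foresee is the Lyapunov step in part (i): verifying that Kingman's theorem applies to the operator-valued products and that the resulting exponential decay of $\|\gamma_{i-1}\cdots\gamma_{i-k}\|_\mathcal{S}$ transfers to $L^2$-summability of the series uniformly enough to yield measurability, stationarity, and uniqueness. The bookkeeping around integrability of $\log^+\|\gamma_0\|_\mathcal{S}$ (needed to even invoke the subadditive theorem), which must be extracted from $\vare_0\in L^4$ together with boundedness of $\alpha$ and $\beta$ in $L^2[0,1]^2$, is the piece requiring the most care.
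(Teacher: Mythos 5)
Your proposal is correct and follows essentially the same route as the paper: reduce \eqref{def-1}--\eqref{def-2} to the recursion $\sigma_i^2=\delta+\gamma_{i-1}\sigma_{i-1}^2$, build the solution as the series $\sum_{k\ge0}\gamma_{i-1}\cdots\gamma_{i-k}\,\delta$, get a.s.\ convergence from submultiplicativity plus a law of large numbers for $\log\|\gamma_{-\ell}\|_{\mathcal S}$, prove uniqueness by iterating the other solution backward and using tightness, and bound moments via Jensen, Minkowski ($\nu\ge1$) and $\nu$-subadditivity ($\nu\le1$). The only cosmetic difference is that you invoke Kingman's subadditive ergodic theorem, whereas the paper needs only the ordinary strong law applied to the i.i.d.\ variables $\log\|\gamma_{-\ell}\|_{\mathcal S}$ after the bound $\|\Gamma_{i,k}\|\le\prod_{\ell=1}^k\|\gamma_{i-\ell}\|_{\mathcal S}$ (which you also use), and the $\log^+$ integrability you worry about is already implicit in requiring $\mathbb{E}[\log\|\gamma_0\|_{\mathcal S}]<0$.
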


The proof of Theorem \ref{l-2-th} is given in Section \ref{sec:proofs}. If $(y_i\colon i\in\mathbb{Z})$ is an fARCH(1) process, then $\gamma_0(s,t)=\alpha(s,t)\vare_0^2(s)$. The condition for the existence of a strictly stationary solution for the fARCH(1) equations in $L^2$ used in H\"ormann et al.\ (2013) can be rewritten as $\mathbb{E}[\|\gamma_0\|^\alpha_\mathcal{S}]<1$. It can then be seen readily from an application of Jensen's inequality that this condition implies \eqref{cond-3}.

\begin{cor}\label{rem-1} 
The following statements are consequences of Theorem \ref{l-2-th}. 

(i) There is a functional $g$ such that
\beq\label{ber-1}
\sigma_{i}^2=g(\vare_{i-1}, \vare_{i-2}, \ldots),\qquad i\in\mathbb{Z},
\eeq
and therefore $(\sigma_i^2\colon i\in\mathbb{Z})$ and $(y_i\colon i\in\mathbb{Z})$ are Bernoulli shifts. 

(ii) The representation \eqref{ber-1} implies the ergodicity of $(\sigma_i^2\colon i\in\mathbb{Z})$ and $(y_i\colon i\in\mathbb{Z})$. 

(iii) Both $(\sigma_i^2\colon i\in\mathbb{Z})$ and $(y_i\colon i\in\mathbb{Z})$ are weakly dependent processes.

(iv) For $\ell\in\mathbb{N}$ and $i,j\in\mathbb{Z}$, let $\vare_{i,j}^{(\ell)}$ be independent copies of $\vare_0$ and define
\[
\sigma_{i,\ell}^2
=g(\vare_{i-1},\ldots \vare_{i-\ell},\vare_{i-\ell-1,i-\ell}^{(i)},\vare_{i-\ell-2,i-\ell}^{(i)}, \ldots).
\]
If \eqref{cond-4} is satisfied, then there is a constant\/ $0<\rho<1$ such that
\beq\label{ber-2}
\mathbb{E}\left[\|\sigma_i^2-\sigma_{i,\ell}^2\|_2^\nu\right]=O(\rho^\ell).
\eeq
\end{cor}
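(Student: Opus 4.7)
The proof splits into four parts, with (iv) carrying the real content; (i)--(iii) will fall out of its proof essentially for free. My plan is to first derive an explicit series representation for the stationary solution, read off (i) and (ii) from it, defer (iii) until (iv) is settled, and then handle (iv) by a cancellation-plus-independence calculation.

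Iterating the defining recursion $\sigma_i^2=\delta+\gamma_{i-1}\sigma_{i-1}^2$, where $\gamma_j$ denotes the random integral operator with kernel $\alpha(t,s)\vare_j^2(s)+\beta(t,s)$ (so that the paper's $\gamma_0$ is the special case $j=0$), gives
\[
\sigma_i^2=\delta+\sum_{k=1}^\infty \gamma_{i-1}\gamma_{i-2}\cdots\gamma_{i-k}\,\delta.
\]
Under the hypotheses of Theorem \ref{l-2-th}, the proof of existence shows that this series converges in $L^2$ almost surely, so $\sigma_i^2$ is a measurable functional $g$ of $(\vare_{i-1},\vare_{i-2},\ldots)$; this is (i), and then $y_i=\sigma_i\vare_i$ is a Bernoulli shift in $(\vare_i,\vare_{i-1},\ldots)$ as well. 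Part (ii) is then a standard consequence: a measurable shift-commuting image of an i.i.d.\ sequence is ergodic. Part (iii) will reduce to part (iv), because geometric $L^\nu$-decay of $\|\sigma_i^2-\sigma_{i,\ell}^2\|_2$ is precisely the $L^\nu$-$m$-approximability notion of weak dependence, and it transfers from $(\sigma_i^2)$ to $(y_i)=(\sigma_i\vare_i)$ by H\"older.

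For (iv) itself, the crucial observation is that the length-$\ell$ partial sums in the series for $\sigma_i^2$ and for the coupled version $\sigma_{i,\ell}^2$ use identical innovations $\vare_{i-1},\ldots,\vare_{i-\ell}$ and therefore cancel exactly. Writing $T_\ell=\gamma_{i-1}\cdots\gamma_{i-\ell}$ and factoring it out of what remains, the difference becomes
\[
\sigma_i^2-\sigma_{i,\ell}^2 = T_\ell\bigl(R-\tilde R\bigr),
\]
where $R=\sum_{k\geq 1}\gamma_{i-\ell-1}\cdots\gamma_{i-\ell-k}\,\delta$ is built from the true innovations $\vare_{i-\ell-1},\vare_{i-\ell-2},\ldots$ and $\tilde R$ is the analogous sum built from the fresh independent copies. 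By stationarity both $R$ and $\tilde R$ are distributed as $\sigma_0^2-\delta$; crucially, $T_\ell$ is independent of the pair $(R,\tilde R)$ since the three objects depend on disjoint blocks of innovations. Combining the operator bound $\|T_\ell x\|_2\leq\|T_\ell\|_{\mathcal{S}}\|x\|_2$, submultiplicativity $\|T_\ell\|_{\mathcal{S}}\leq\prod_{j=1}^\ell\|\gamma_{i-j}\|_{\mathcal{S}}$, independence of the $\gamma_{i-j}$, and the moment bound from Theorem \ref{l-2-th}(ii), one obtains
\[
\mathbb{E}\bigl[\|\sigma_i^2-\sigma_{i,\ell}^2\|_2^\nu\bigr]
\leq \mathbb{E}[\|T_\ell\|_{\mathcal{S}}^\nu]\cdot\mathbb{E}[(\|R\|_2+\|\tilde R\|_2)^\nu]
\leq C\bigl(\mathbb{E}[\|\gamma_0\|_{\mathcal{S}}^\nu]\bigr)^\ell,
\]
so setting $\rho=\mathbb{E}[\|\gamma_0\|_{\mathcal{S}}^\nu]\in(0,1)$ by \eqref{cond-4} completes the proof.

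The main obstacle I anticipate is not any single estimate but the bookkeeping in the cancellation step: pairing up the two infinite series term-by-term, checking that the common prefix $T_\ell$ really pulls out uniformly over the tail, and confirming the independence $T_\ell\perp(R,\tilde R)$ by tracking which innovations enter each object. Once that is clean, submultiplicativity plus independence handles both $\nu\leq 1$ and $\nu>1$ uniformly, since the submultiplicative inequality survives raising to the $\nu$-th power and the expectation factors across independent terms either way.
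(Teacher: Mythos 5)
Your proof is correct and follows essentially the same route as the paper: the Volterra-type series $\sigma_i^2=\sum_{k\geq0}\Gamma_{i,k}\,\delta$ gives (i) and (ii), (iii) is reduced to (iv), and (iv) rests on the coincidence of the first $\ell$ terms of the two coupled series together with submultiplicativity of the norm, independence of the $\gamma_j$, and the moment bound of Theorem \ref{l-2-th}(ii), yielding geometric decay with $\rho=\mathbb{E}[\|\gamma_0\|_{\mathcal S}^\nu]<1$. The only (harmless) organizational difference is that the paper bounds the two tails $\sum_{k\geq\ell}\Gamma_{i,k}\,\delta$ and $\sum_{k\geq\ell}\Gamma_{i,\ell,k}\,\delta$ separately, splitting into the cases $\nu\geq1$ (Minkowski) and $0<\nu\leq1$ (subadditivity), whereas you factor out the common prefix $T_\ell$ and use its independence from the two remainders, which treats both ranges of $\nu$ in one stroke.
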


Statement (iv) of the corollary says that, for every $\ell$, the sequence $(\sigma_i^2\colon i\in\mathbb{Z})$ can be approximated with the $\ell$--dependent sequence $(\sigma_{i,\ell}^2\colon i\in\mathbb{Z})$ under a geometric rate. Consequently, if \eqref{cond-4} holds with some $\nu\geq 2$, then the partial sum process $(N^{-1/2}\sum_{i=1}^{\lfloor Nx\rfloor}\{\sigma_i^2-\mathbb{E}[\sigma_i^2]\})$ satisfies the functional central limit theorem in the space of square integrable processes; see Berkes et al.\ (2012) and Jirak (2013) for recent theoretical contributions in this area. 

If smoother trajectories are required of the fGARCH(1,1) and its volatility process, one may view $(y_i\colon i\in\mathbb{Z})$ and $(\sigma_i^2\colon i\in\mathbb{Z})$ as functional sequences in $\cC[0,1]$. To give the analog of Theorem \ref{l-2-th} for continuous functions, let $\bar\gamma_0=\int\gamma_0(\cdot,s)ds$ and observe that $\bar\gamma_0\in\cC[0,1]$. The next theorem shows that the role of $\|\gamma_0\|_\mathcal{S}$ for square integrable functions is assumed by $\|\bar\gamma_0\|_\cC=\sup_t|\bar\gamma_0(t)|$ for continuous functions. 

\begin{theorem}\label{sup-th} 
Assume that \eqref{cond-1} and \eqref{cond-2} hold, that $\delta\in \cC[0,1]$, and that $\vare_0\in \cC[0,1]$ with probability one.

(i) If
\beq\label{cond-5}
-\infty\leq \mathbb{E}
\left[\log\|\bar\gamma_0\|_\mathcal{C}\right]<0,
\eeq
then \eqref{def-1} and \eqref{def-2} have a unique, strictly stationary and nonanticipative solution in $\cC[0,1]$.

(ii) If there is $\nu>0$ such that
\beq\label{cond-6}
\mathbb{E}\left[
\|\bar\gamma_0\|_\cC^\nu\right]<1,
\eeq
then \eqref{def-1} and \eqref{def-2} have a unique, strictly stationary nonanticipative solution in $\cC[0,1]$ and $\mathbb{E}[\|\sigma_0^2\|_\cC^{\nu}]<\infty$.
\end{theorem}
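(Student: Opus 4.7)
The plan is to follow the template of the proof of Theorem \ref{l-2-th}, replacing the Hilbert--Schmidt norm $\|\gamma_0\|_\mathcal{S}$ with the supremum norm $\|\bar\gamma_0\|_\cC$ throughout. Combining \eqref{def-1} and \eqref{def-2} yields the autonomous recursion $\sigma_i^2 = \delta + \gamma_{i-1}\sigma_{i-1}^2$, where $\gamma_j$ is the random nonnegative integral operator with kernel $\gamma_j(t,s) = \alpha(t,s)\vare_j^2(s) + \beta(t,s)$. Iterating formally, the candidate strictly stationary solution is
\[
\sigma_0^2 = \delta + \sum_{k=1}^\infty \gamma_{-1}\gamma_{-2}\cdots\gamma_{-k}\delta,
\]
with $\sigma_i^2$ obtained from $\sigma_0^2$ by shifting time indices. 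Once the series converges in $\cC[0,1]$, the sum is a measurable functional of $(\vare_{-1},\vare_{-2},\ldots)$ and hence defines a strictly stationary, nonanticipative sequence satisfying \eqref{def-2} by construction; setting $y_i=\sigma_i\vare_i$ then yields the corresponding $y$-process.

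The central technical step is the operator bound: for any nonnegative $x\in\cC[0,1]$,
\[
\|\gamma_j x\|_\cC = \sup_t\int\gamma_j(t,s)x(s)\,ds \leq \|x\|_\cC\,\|\bar\gamma_j\|_\cC,
\]
which plays the role of the Hilbert--Schmidt submultiplicativity used in Theorem \ref{l-2-th}. Nonnegativity of the $\gamma_j$ then gives $\|\gamma_{-1}\cdots\gamma_{-k}\delta\|_\cC \leq \|\delta\|_\cC\prod_{j=1}^k\|\bar\gamma_{-j}\|_\cC$. For part~(i), the sequence $(\|\bar\gamma_{-j}\|_\cC)_{j\geq 1}$ is i.i.d.\ by \eqref{cond-2}, and the strong law of large numbers (extended to the case $\mE[\log\|\bar\gamma_0\|_\cC]=-\infty$ by a standard truncation argument) together with \eqref{cond-5} gives $k^{-1}\sum_{j=1}^k\log\|\bar\gamma_{-j}\|_\cC \to \mE[\log\|\bar\gamma_0\|_\cC]<0$ almost surely, so the product decays at an almost-sure geometric rate and the series converges absolutely in $\cC[0,1]$.

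For part~(ii), independence of the $\vare_{-j}$'s together with the preceding factorization yields
\[
\mE\bigl[\|\gamma_{-1}\cdots\gamma_{-k}\delta\|_\cC^{\nu}\bigr] \leq \|\delta\|_\cC^{\nu}\bigl(\mE[\|\bar\gamma_0\|_\cC^\nu]\bigr)^k,
\]
and \eqref{cond-6} makes the right-hand side summable; subadditivity of $x\mapsto x^\nu$ handles $\nu\leq 1$ while Minkowski's inequality handles $\nu>1$, which delivers both convergence of the series in $L^\nu$ and the moment bound $\mE[\|\sigma_0^2\|_\cC^\nu]<\infty$. Uniqueness is then routine: two strictly stationary nonanticipative solutions $(\sigma_i^2)$ and $(\tilde\sigma_i^2)$ satisfy $\sigma_i^2-\tilde\sigma_i^2 = \gamma_{i-1}\cdots\gamma_{i-k}(\sigma_{i-k}^2-\tilde\sigma_{i-k}^2)$ for every $k\geq 1$, and the same product estimate drives the $\cC$-norm of the difference to zero as $k\to\infty$. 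The main obstacle I anticipate is the regularity issue of showing that $\gamma_j$ genuinely maps $\cC[0,1]$ into itself given that $\alpha,\beta$ are only assumed $L^2$ on $[0,1]^2$; this appears to require an implicit joint-continuity or approximation assumption on the kernels, after which everything reduces cleanly to the supremum-norm analogue of the argument used for Theorem~\ref{l-2-th}.
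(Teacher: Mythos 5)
Your proposal is correct and follows essentially the same route as the paper's proof: the same candidate series $\sum_{k\geq 0}\Gamma_{i,k}\,\delta$, the same key bound $\|\Gamma_{i,k}\,\delta\|_\cC\leq\|\delta\|_\cC\prod_{\ell=1}^k\|\bar\gamma_{i-\ell}\|_\cC$, the strong law of large numbers for part (i), Minkowski's inequality respectively subadditivity of $x\mapsto x^\nu$ for part (ii), and a backward-iteration argument for uniqueness using boundedness in probability of $\|\bar\sigma_{i-m}^2\|_\cC$ under stationarity. Your closing remark about the regularity of the kernels is a fair observation --- the paper simply asserts $\bar\gamma_0\in\cC[0,1]$ without elaboration --- but it does not change the substance of the argument.
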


\begin{cor}
\label{rem-1/2}
As in Corollary \ref{rem-1}, the sequences $(y_i\colon i\in\mathbb{Z})$ and $(\sigma_i\colon i\in\mathbb{Z})$ are weakly dependent. Under condition \eqref{cond-6} there is a constant $0<\rho<1$ such that
\[
\mathbb{E}\left[\|\sigma_i^2-\sigma_{i,\ell}^2\|_\cC^\nu\right]=O(\rho^\ell),
\]
where $\sigma_{i,\ell}^2$ is defined in Corollary \ref{rem-1}.
\end{cor}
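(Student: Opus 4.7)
The plan is to mirror the proof of Corollary \ref{rem-1}, replacing the Hilbert--Schmidt norm $\|\gamma_0\|_\mathcal{S}$ throughout by the supremum norm $\|\bar\gamma_0\|_\cC$. For the weak dependence claim, note that the unique solution produced in Theorem \ref{sup-th}(i) is obtained as an almost-sure $\cC[0,1]$-limit of the iterates of the random affine map and therefore admits a Bernoulli shift representation $\sigma_i^2 = g(\vare_{i-1}, \vare_{i-2}, \ldots)$; ergodicity and weak dependence of $(\sigma_i^2)$, and hence of $(\sigma_i)$ and $(y_i) = (\sigma_i\vare_i)$, then follow exactly as in parts (i)--(iii) of Corollary \ref{rem-1}.

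For the rate, write $\gamma_j$ for the random integral operator with kernel $\gamma_j(t,s) = \alpha(t,s)\vare_j^2(s) + \beta(t,s)$ and unfold $\sigma_i^2 = \delta + \gamma_{i-1}(\sigma_{i-1}^2)$ exactly $\ell$ times. The ``head'' terms $\delta,\gamma_{i-1}(\delta),\ldots,\gamma_{i-1}\cdots\gamma_{i-\ell+1}(\delta)$ are by construction identical in $\sigma_i^2$ and $\sigma_{i,\ell}^2$ and therefore cancel, producing the telescoping identity
\[
\sigma_i^2 - \sigma_{i,\ell}^2 = \gamma_{i-1}\gamma_{i-2}\cdots\gamma_{i-\ell}\bigl(\sigma_{i-\ell}^2 - \tilde\sigma_{i-\ell}^2\bigr),
\]
where $\tilde\sigma_{i-\ell}^2$ is the stationary solution built from the independent copies $\vare^{(i)}_{i-\ell-1,i-\ell},\vare^{(i)}_{i-\ell-2,i-\ell},\ldots$, and thus has the same distribution as $\sigma_{i-\ell}^2$.

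The pivotal estimate is then a $\cC$-norm bound on the operator product. Since $\gamma_j(t,s)\geq 0$, the pointwise inequality $|\gamma_j(f)(t)| \leq \|f\|_\cC\,\bar\gamma_j(t)$ holds for every $f \in \cC[0,1]$, and an induction on $\ell$ yields
\[
\|\gamma_{i-1}\cdots\gamma_{i-\ell}(f)\|_\cC \leq \|f\|_\cC \prod_{k=1}^\ell \|\bar\gamma_{i-k}\|_\cC.
\]
Applying this with $f = \sigma_{i-\ell}^2 - \tilde\sigma_{i-\ell}^2$, raising to the $\nu$-th power, and exploiting the mutual independence of the three blocks $\{\vare_{i-1},\ldots,\vare_{i-\ell}\}$, the innovations defining $\sigma_{i-\ell}^2$, and the independent copies defining $\tilde\sigma_{i-\ell}^2$, the expectation factors multiplicatively. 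Combined with the elementary bound $\|\sigma_{i-\ell}^2 - \tilde\sigma_{i-\ell}^2\|_\cC^\nu \leq c_\nu(\|\sigma_{i-\ell}^2\|_\cC^\nu + \|\tilde\sigma_{i-\ell}^2\|_\cC^\nu)$ and the moment bound $\mathbb{E}[\|\sigma_0^2\|_\cC^\nu] < \infty$ from Theorem \ref{sup-th}(ii), this delivers
\[
\mathbb{E}\bigl[\|\sigma_i^2 - \sigma_{i,\ell}^2\|_\cC^\nu\bigr] \leq K\rho^\ell \qquad\mbox{with}\qquad \rho = \mathbb{E}[\|\bar\gamma_0\|_\cC^\nu] < 1,
\]
where $\rho < 1$ is exactly condition \eqref{cond-6}. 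The principal obstacle is the clean derivation of the iterated multiplicative $\cC$-norm bound for a not-necessarily-nonnegative argument $f$, which rests crucially on the nonnegativity of the kernels $\alpha$, $\beta$ and the resulting pointwise control by $\bar\gamma_j$; once this inequality is in hand, the rest is a mechanical use of the i.i.d.\ structure.
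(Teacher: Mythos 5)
Your proposal is correct and follows essentially the same route as the paper: the published argument (by reference to Corollary \ref{rem-1}) also rests on the cancellation of the first $\ell$ terms of the series representation, the sup-norm product bound $\|\Gamma_{i,k}\,x\|_\cC\leq\|x\|_\cC\prod_{j=1}^{k}\|\bar\gamma_{i-j}\|_\cC$ already used in the proof of Theorem \ref{sup-th}, and the i.i.d.\ structure together with \eqref{cond-6}. The only (cosmetic) difference is that you package the residual as the single coupled term $\Gamma_{i,\ell}(\sigma_{i-\ell}^2-\tilde\sigma_{i-\ell}^2)$ and factor the expectation by independence, whereas the paper bounds the two tail sums $\sum_{k\geq\ell}\Gamma_{i,k}\,\delta$ and $\sum_{k\geq\ell}\Gamma_{i,\ell,k}\,\delta$ separately via Minkowski's inequality (for $\nu\geq1$) or subadditivity (for $\nu\leq1$); both yield the geometric rate.
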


The proofs of Theorem \ref{sup-th} and Corollary \ref{rem-1/2} are given in Section \ref{sec:proofs}. The next section covers an estimation procedure for the fGARCH(1,1) process. Empirical examples are part of Section \ref{sec:empirical}.


\section{Estimation}
\label{sec:estimation}
\setcounter{equation}{0}

The goal of this section is to obtain an estimate of the volatility equation \eqref{def-2}, that is, of the function $\delta$ and the operators $\alpha$ and $\beta$ in \eqref{def-2}. Since the objects involved in the estimation procedure are infinite-dimensional, an $M$-dimensional class $\Phi_M=\{\varphi_1, \varp_2, \ldots, \varp_M\}$ of orthonormal functions on $[0,1]$ is introduced to represent $\delta$, $\alpha$ and $\beta$ in the following way. It is assumed that $\delta$ can be represented as a linear combination of the functions in $\Phi_M$, that is,
\beq\label{pare-0}
\delta=\sum_{m=1}^Md_m\varp_m.
\eeq
It is further assumed that the integral kernels $\alpha(s,t)$ and $\beta(s,t)$ are elements of the span of $\Phi_M\times\Phi_M$, so that
\beq\label{pare-1}
\alpha(t,s)=\sum_{m,m^\prime=1}^Ma_{m,m^\prime}\varp_m(t)\varp_{m^\prime}(s)
\qquad\mbox{and}\qquad
\beta(t,s)=\sum_{m,m^\prime=1}^Mb_{m,m^\prime}\varp_m(t)\varp_{m^\prime}(s).
\eeq
With these requirements in place, the problem of estimating $\delta$, $\alpha$ and $\beta$ from a functional sample $y_1, y_2, \ldots, y_n$ reduces to estimating the set of real-valued parameters $\{d_m, a_{m,m^\prime}, b_{m,m^{\prime}}\colon m, m^\prime=1,\ldots,M\}$. To this end, project first $y_1^2,\ldots,y_n^2$ and $\sigma_1^2,\ldots,\sigma_n^2$ onto $\Phi_M$ and define the $M$-dimensional vectors $\by_i^{(2)}=(y_{i,1}^{(2)},\ldots,y_{i,M}^{(2)})^\T$ and $\bs_i^{(2)}=(s_{i,1}^{(2)},\ldots,s_{i,M}^{(2)})^\T$ through their entries $y_{i,m}^{(2)}=\la y_i^2,\varp_m\ra$ and $s_{i,m}^{(2)}=\la\sigma_i^2,\varp_m\ra$, where $\la\cdot,\cdot\ra$ denotes the inner product in $L^2$ and ${}^\T$ stands for the transpose of vectors and matrices. Using \eqref{def-2} it follows that
\beq\label{rec-e-1}
\bs_i^{(2)}=\bd+\bA \by_{i-1}^{(2)}+\bB\bs_{i-1}^{(2)},
\eeq
where $\bd=(d_1, \ldots, d_M)^\T$ with $d_m=\la \delta, \varp_m\ra$ and $\bA$ and $\bB$ are $M\times M$ matrices whose $(m,m^\prime)$th entries are given by $a_{m,m^\prime}$ and $b_{m,m^\prime}$. With this multivariate formulation of the volatility equation, iterating \eqref{rec-e-1} yields
\beq\label{rec-e-2}
\bs_i^{(2)}=\sum_{\ell=1}^\infty \bB^{\ell-1}\bd+\sum_{\ell=1}^\infty\bB^{\ell-1}\bA \by_{i-\ell}^{(2)},
\eeq
where $\bB^0=\bI$, the $M\times M$ identity matrix. Note that \eqref{rec-e-2} provides a representation for the projections of the unobservable volatilities in terms of the observations. Define the generic equation
\beq\label{rec-e-2/3}
\tilde{\bs}_i^{(2)}=\tilde{\bs}_i^{(2)}(\ubtheta)=\sum_{\ell=1}^\infty \ubB^{\ell-1}\ubd+\sum_{\ell=1}^{\infty}\ubB^{\ell-1}\ubA \by_{i-\ell}^{(2)},
\eeq
where $\ubtheta=(\ubd^\T, \ubA^\T, \ubB^\T)^\T\in \bTheta\subset\mathbb{R}^{M+2M^2}$.  It is clear that $\bs_i^{(2)}=\tilde{\bs}_i^{(2)}(\btheta)$, where $\btheta=(\bd^\T, \bA^\T, \bB^\T)^\T$ is the true parameter of the fGARCH(1,1) process $(y_i\colon i\in\mathbb{Z})$. Since this can always be achieved by an appropriate scaling, it is assumed without loss of generality that
\beq\label{rec-e-3}
\mathbb{E}[\vare_i^2(t)]=1 
\qquad\mbox{for all}\;t \in[0,1].
\eeq
If $\mathcal{F}_i=\sigma(\vare_\ell\colon\ell\leq i)$ denotes the $\sigma$-algebra generated by the all innovations up to $i$, then \eqref{rec-e-3} implies that $\mathbb{E}[\by_i^{(2)}|\mathcal{F}_i]=\bs_{i}^{(2)}$ and hence the least square estimator of $\mathbf{\theta}$ is given as the smallest value of the criterion function
$\sum_{i=1}^n\{\by_i^{(2)}-\tilde{\bs}^{(2)}_{i}(\ubtheta)\}^\T\{\by_i^{(2)}-\tilde{\bs}^{(2)}_{i}(\ubtheta)\}$. Since only $y_{1},\ldots y_n$ are observed, the actual parameter estimator is based on a truncated version of \eqref{rec-e-2/3}, namely on
\[
\hat{\bs}_i^{(2)}=\hat{\bs}_i^{(2)}(\ubtheta)
=\sum_{\ell=1}^{i-1} \ubB^{\ell-1}\ubd+\sum_{\ell=1}^{i-1}\ubB^{\ell-1}\ubA \by_{i-\ell}^{(2)}.
\]
This leads to the least squares estimator
\[
\hat{\ubtheta}_n=\mbox{argmin}\bigg\{
S_n(\ubtheta)= \sum_{i=2}^n\{\by_i^{(2)}-\hat{\bs}^{(2)}_{i}(\ubtheta)\}^\T\{\by_i^{(2)}-\hat{\bs}^{(2)}_{i}(\ubtheta)\}
\colon\ubtheta\in \bTheta
\bigg\},
\]
with $\hat{\ubtheta}_n=(\hat{\bd}_n^\T, \hat{\bA}_n^\T, \hat{\bB}_n^\T)^\T$. The main result of this section is the strong consistency of $\hat{\ubtheta}_n$. Some guarantees to ensure the identifiability of $\btheta$ are needed and collected next. Let $\|\cdot\|$ denote the Euclidean norm of vectors and matrices.

\begin{assu}\label{assu1}
It is assumed that \\
(A1)
$\by_1^{(2)}\;\;\mbox{is not measurable with respect to  } {\mathcal F}_0$; \\
(A2) 
$\bA\mbox{ is nonsingular and } \|\bB\|<1$; \\
(A3) 
$\bTheta \mbox{ is a compact set and }\btheta \mbox{ is in the interior of } \bTheta$;
and \\
(A4) 
%
$\mbox{there are }0<c_1\mbox{ and }c_2<1 \mbox{ such that } c_1\leq|\mathrm{det}(\ubA)|\mbox{ and }\|\ubB\|\leq c_2
\mbox{ for all }\ubtheta=(\ubd^\T, \ubA^\T, \ubB^\T)^\T\in \bTheta$.
\end{assu}

Part (A1) of Assumption \ref{assu1} means hat $\by_1^{(2)}$ cannot be predicted almost surely from its past. Parts (A3) and (A4) on the parameter space $\bTheta$ contain standard regularity conditions. 

\begin{theorem}\label{cons-th} 
Assume that \eqref{def-1}--\eqref{cond-2} and \eqref{cond-4} hold with $\nu=1$, that $\mathbb{E}[\|\vare_0\|_2^2]<\infty$ and \eqref{rec-e-3} is satisfied. Then $\hat{\ubtheta}_n$ is strongly consistent for $\btheta$ under Assumption \ref{assu1}, that is, 
\[
\hat{\ubtheta}_n\to\btheta
\qquad\mbox{a.s.}
\]
as $n\to \infty$.
\end{theorem}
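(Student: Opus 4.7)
The plan is to follow the classical M-estimator argument. I would first replace the truncated criterion $S_n(\ubtheta)$ by the untruncated one
\[
\tilde S_n(\ubtheta)=\sum_{i=2}^n\{\by_i^{(2)}-\tilde{\bs}_i^{(2)}(\ubtheta)\}^\T\{\by_i^{(2)}-\tilde{\bs}_i^{(2)}(\ubtheta)\},
\]
then apply an ergodic theorem to show that $n^{-1}\tilde S_n(\ubtheta)$ converges uniformly on $\bTheta$ to a limiting contrast $L(\ubtheta)$, and finally argue that $L$ is uniquely minimized at the true $\btheta$. Strong consistency is then the standard argmin conclusion on a compact parameter space.

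\textbf{Truncation step.} The difference between $\hat{\bs}_i^{(2)}(\ubtheta)$ and $\tilde{\bs}_i^{(2)}(\ubtheta)$ is the tail $\sum_{\ell\geq i}\ubB^{\ell-1}(\ubd+\ubA \by_{i-\ell}^{(2)})$. By Assumption (A4), $\|\ubB\|\leq c_2<1$ uniformly on $\bTheta$, so in norm this tail is bounded by a geometric series of the form $C c_2^{i-1}\big(1+\sum_{\ell\geq i}c_2^{\ell-i}\|\by_{i-\ell}^{(2)}\|\big)$. Using stationarity and $\mathbb{E}\|\by_0^{(2)}\|<\infty$ (which follows from \eqref{cond-4} with $\nu=1$ and $\mathbb{E}\|\vare_0\|_2^2<\infty$, via $|\langle y_i^2,\varp_m\rangle|\leq \|\varp_m\|_\infty\|y_i\|_2^2$ and $\mathbb{E}\|y_0\|_2^2=\mathbb{E}\int\sigma_0^2(t)\,dt$), the Borel--Cantelli lemma yields $\sup_{\ubtheta\in\bTheta}n^{-1}|S_n(\ubtheta)-\tilde S_n(\ubtheta)|\to 0$ almost surely.

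\textbf{Convergence of the idealized criterion.} By Corollary \ref{rem-1}, $(y_i)$ admits a Bernoulli shift representation and is therefore ergodic; the same holds for the coupled sequence $(\by_i^{(2)},\tilde{\bs}_i^{(2)}(\ubtheta))$ for each fixed $\ubtheta$. Birkhoff's theorem gives
\[
\frac{1}{n}\tilde S_n(\ubtheta)\to L(\ubtheta)=\mathbb{E}\|\by_1^{(2)}-\tilde{\bs}_1^{(2)}(\ubtheta)\|^2\qquad\text{a.s.,}
\]
provided this expectation is finite, which again follows from (A4) and the moment bound above. To upgrade to uniform convergence on the compact $\bTheta$ of (A3), I would differentiate \eqref{rec-e-2/3} termwise to obtain a Lipschitz bound $\|\tilde{\bs}_i^{(2)}(\ubtheta)-\tilde{\bs}_i^{(2)}(\ubtheta')\|\leq K_i\|\ubtheta-\ubtheta'\|$ with $K_i$ a geometric-type series in $\|\by_{i-\ell}^{(2)}\|$, apply the ergodic theorem to $K_i$ (and to $\|\by_i^{(2)}\|$), and conclude by a standard continuity/compactness (stochastic equicontinuity) argument.

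\textbf{Identification.} This is the delicate step and the main obstacle. The key identity is supplied by the normalization \eqref{rec-e-3}, which yields $\mathbb{E}[\by_i^{(2)}\mid\mathcal{F}_{i-1}]=\bs_i^{(2)}$. Expanding
\[
L(\ubtheta)=\mathbb{E}\|\by_1^{(2)}-\bs_1^{(2)}\|^2+\mathbb{E}\|\bs_1^{(2)}-\tilde{\bs}_1^{(2)}(\ubtheta)\|^2,
\]
the first term is $L(\btheta)$ and the second is nonnegative, so equality forces $\tilde{\bs}_1^{(2)}(\ubtheta)=\bs_1^{(2)}$ almost surely. Subtracting the two series expansions,
\[
\bigl[(\bI-\bB)^{-1}\bd-(\bI-\ubB)^{-1}\ubd\bigr]+\sum_{\ell=1}^\infty\bigl(\bB^{\ell-1}\bA-\ubB^{\ell-1}\ubA\bigr)\by_{1-\ell}^{(2)}=0\quad\text{a.s.}
\]
Conditioning on $\mathcal{F}_{-1}$ and subtracting, the $\mathcal{F}_{-1}$-measurable terms cancel and one is left with $(\bA-\ubA)(\by_0^{(2)}-\bs_0^{(2)})=0$ a.s.; since (A1) and stationarity guarantee that $\by_0^{(2)}-\bs_0^{(2)}$ is genuinely non-degenerate, this forces $\bA=\ubA$. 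Iterating the same conditioning argument one lag at a time gives $\bB^{\ell-1}\bA=\ubB^{\ell-1}\ubA$ for all $\ell\geq 1$; the nonsingularity of $\bA$ in (A2) then yields $\bB=\ubB$, and matching the deterministic parts gives $\bd=\ubd$. Combined with the uniform convergence and the compactness of $\bTheta$, the standard M-estimator argument delivers $\hat{\ubtheta}_n\to\btheta$ almost surely.
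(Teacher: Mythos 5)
Your proposal is correct and follows essentially the same route as the paper: a truncation step exploiting $\|\ubB\|\le c_2<1$, a uniform ergodic-theorem argument for the idealized criterion, and identification via the orthogonal decomposition induced by $\mathbb{E}[\by_i^{(2)}\mid\mathcal{F}_{i-1}]=\bs_i^{(2)}$ followed by lag-by-lag matching of the coefficients $\bB^{\ell-1}\bA$ using (A1) and the nonsingularity of $\bA$. The only cosmetic difference is that you extract $\bB^{\ell-1}\bA=\ubB^{\ell-1}\ubA$ by successively conditioning on $\mathcal{F}_{-\ell}$, whereas the paper argues by contradiction from the smallest mismatched lag; both hinge on exactly the same non-predictability reading of (A1).
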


To establish the asymptotic normality of $\hat{\ubtheta}_n$, introduce the quantities
\[
{\bH}_0
=\mathbb{E}\bigg[\frac{\partial\tilde{\bs}^{(2)}_0(\btheta)}{\partial\ubtheta}\bigg],
\qquad 
{\bJ}_0=
\mathbb{E}\Big[\big\{\by^{(2)}-\bs^{(2)}_0\big\}\big\{\by^{(2)}-\bs^{(2)}_0\big\}^\T\Big]
\]
and
\[
{\bQ}_0=
\mathbb{E}\bigg[ \bigg(\frac{\partial\tilde{\bs}^{(2)}_0(\btheta)}{\partial\ubtheta}  \bigg)^\T 
\bigg(\frac{\partial\tilde{\bs}^{(2)}_0(\btheta)}{\partial\ubtheta}\bigg)  \bigg].
\]
The matrices ${\bH}_0$ and ${\bQ}_0$ exist under the assumptions of Theorem \ref{cons-th}. If $\mathbb{E}[\|\sigma^2_0\|_2^4]<\infty$ and $\mathbb{E}[\|\vare^2_0\|_2^4]<\infty$, then ${\bJ}_0$ is well defined. Following the literature on the asymptotic normality of $M$-estimators, it is further assumed that
\beq\label{non-sin}
{\bQ}_0\;\;\mbox{is a non-singular matrix.}
\eeq

\begin{theorem}\label{norm-th}
Assume that the conditions of Theorem \ref{cons-th} are satisfied, \eqref{non-sin} holds and $\mathbb{E}[\|y_0^2\|_2^4]<\infty$. Then, as $n\to \infty$,
\[
\sqrt{n}(\hat{\ubtheta}_n-\btheta)
\stackrel{\mathcal D}{\to}\bN,
\]
where $\bN$ is a $2M^2+M$-dimensional normal vector with $\mathbb{E}[\bN]=\bzero$ and
$
\mathbb{E}[\bN\bN^\T]=\bQ^{-1}_0\bH^\T_0\bJ_0\bH_0\bQ^{-1}_0.
$

\end{theorem}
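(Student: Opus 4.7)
The plan is the classical M-estimator scheme: expand the first-order condition $\nabla S_n(\hat\ubtheta_n)=\bzero$ around $\btheta$ to first order, invert the Hessian, and identify the asymptotic law of the score by a martingale central limit theorem. Assumption \ref{assu1}(A3) together with the strong consistency in Theorem \ref{cons-th} ensures that $\hat\ubtheta_n$ eventually lies in the interior of $\bTheta$, so the first-order condition holds almost surely for $n$ large. A component-wise mean value expansion then yields
\[
\sqrt{n}(\hat\ubtheta_n-\btheta)=-\left(\frac{1}{n}\nabla^2 S_n(\ubtheta^{*})\right)^{-1}\cdot\frac{1}{\sqrt{n}}\nabla S_n(\btheta),
\]
with $\ubtheta^{*}$ on the segment between $\btheta$ and $\hat\ubtheta_n$. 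The proof then splits into (i) passing from the truncated $\hat{\bs}_i^{(2)}$ to its stationary analog $\tilde{\bs}_i^{(2)}$, (ii) a limit theorem for the score, and (iii) a law of large numbers for the Hessian.

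For (i), Assumption \ref{assu1}(A4) furnishes $\|\ubB\|\le c_2<1$ uniformly on $\bTheta$, so
$\tilde{\bs}_i^{(2)}(\ubtheta)-\hat{\bs}_i^{(2)}(\ubtheta)=\sum_{\ell\ge i}\ubB^{\ell-1}(\ubd+\ubA\by_{i-\ell}^{(2)})$
and its first two partial derivatives in $\ubtheta$ decay geometrically in $i$. Combined with stationarity of $(\by_i^{(2)})$ (Corollary \ref{rem-1}) and $\mathbb{E}[\|y_0^2\|_2^4]<\infty$, this gives $n^{-1/2}(\nabla S_n(\btheta)-\nabla\tilde S_n(\btheta))\to 0$ a.s.\ and $\sup_{\ubtheta\in U}n^{-1}\|\nabla^2 S_n(\ubtheta)-\nabla^2\tilde S_n(\ubtheta)\|\to 0$ a.s.\ on a compact neighborhood $U$ of $\btheta$, where $\tilde S_n$ denotes the non-truncated version of $S_n$.

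For (ii), write $\nabla\tilde S_n(\btheta)=-2\sum_{i=2}^{n}D_i^{\T}u_i$ with $D_i=\partial\tilde{\bs}_i^{(2)}(\btheta)/\partial\ubtheta$ and $u_i=\by_i^{(2)}-\bs_i^{(2)}$. Since $D_i$ depends only on $\by_{i-1}^{(2)},\by_{i-2}^{(2)},\ldots$, it is $\mathcal{F}_{i-1}$-measurable, whereas \eqref{rec-e-3} and the independence of $\vare_i$ from $\mathcal{F}_{i-1}$ give $\mathbb{E}[u_i\mid\mathcal{F}_{i-1}]=\bzero$. Thus $(D_i^{\T}u_i)$ is a stationary, ergodic martingale difference sequence whose second moments are finite under $\mathbb{E}[\|y_0^2\|_2^4]<\infty$ together with the geometric summability of $D_i$; the martingale CLT then delivers $n^{-1/2}\nabla\tilde S_n(\btheta)\stackrel{\mathcal D}{\to}\mathcal{N}(\bzero,4\bH_0^{\T}\bJ_0\bH_0)$. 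For (iii), a uniform ergodic theorem on the compact neighborhood $U$, available because $\tilde{\bs}_i^{(2)}$ is analytic in $\ubtheta$ with dominated power-series coefficients, combined with the consistency of $\hat\ubtheta_n$ and the truncation bound from (i), yields $n^{-1}\nabla^2 S_n(\ubtheta^{*})\to 2\bQ_0$ a.s. Slutsky's theorem together with \eqref{non-sin} concludes, producing the covariance $\bQ_0^{-1}\bH_0^{\T}\bJ_0\bH_0\bQ_0^{-1}$.

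The main obstacle I expect is the uniform-in-$\ubtheta$ control of the truncation error for the second derivatives. The $\partial^2/\partial\ubB^2$ terms of $\tilde{\bs}_i^{(2)}$ produce triple convolutions of the form $\sum_{\ell}\sum_{j_1+j_2\leq\ell-2}\ubB^{j_1}E_{k,l}\ubB^{j_2}E_{k',l'}\ubB^{\ell-2-j_1-j_2}(\ubd+\ubA\by_{i-\ell}^{(2)})$ whose coefficients grow polynomially in $\ell$ before the geometric factor $c_2^{\ell}$ takes over. Establishing almost-sure summability of these tails uniformly on a neighborhood of $\btheta$ with only the assumed four-moment condition will require a maximal inequality combined with a standard covering argument over $\bTheta$.
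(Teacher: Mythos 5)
Your proposal follows essentially the same route as the paper's proof: the first-order condition with a coordinate-wise mean value expansion, geometric control of the truncation error between $\hat{\bs}_i^{(2)}$ and $\tilde{\bs}_i^{(2)}$ for the first two derivatives, a martingale-difference CLT for the score yielding the covariance $\bH_0^\T\bJ_0\bH_0$, and a uniform ergodic theorem for the Hessian giving $2\bQ_0$ before Slutsky's theorem. The only cosmetic difference is that the paper packages the second-derivative-times-residual term into a continuous matrix function $\bQ(\ubtheta)$ with $\bQ(\btheta)=\bQ_0$, while you absorb it implicitly via consistency; the uniform tail-summability issue you flag for the $\partial^2/\partial\ubB^2$ terms is likewise passed over by the paper with a reference to the earlier truncation argument.
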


The proofs of Theorem \ref{cons-th} and \ref{norm-th} are given in Section \ref{sec:proofs}. Typical choices for the orthonormal class of functions $\Phi_M$ include the common Fourier and  $B$-spline bases, but other choices such as wavelets may be entertained as well. If the finite-dimensionality condition \eqref{pare-1} is not satisfied, then $\delta$, $\alpha$ and $\beta$ can typically still be well approximated by finite-dimensional functions assuming that $M$ is sufficiently large. While for theoretical considerations $M$ should grow with the sample size, it is worthwhile noting that in practice often a small choice of $M$ (less than, say, $5$) will already work reasonably well. This aspect was investigated in the simulations reported in Aue et al.\ (2015) and is also reported elsewhere in the functional data literature (see Horv\'ath and Kokoszka, 2012). 

A different route for utilizing the estimation procedure that leads to a small class $\Phi_M$ is to follow the idea of functional principal components analysis (see Horv\'ath and Kokoszka, 2012). For example, it can be proposed to use the eigenfunctions of the covariance kernel
\[
D(t,s)=\mathrm{Cov}(y_0^2(t), y_0^2(s)),
\qquad
t,s\in[0,1],
\]
as a basis, so that a few large eigenvalues and their corresponding eigenfunctions capture the most important directions of randomness in $y_1^2, \ldots, y_n^2$ and hence in the unobservable volatilities $\sigma_1^2,\ldots,\sigma_n^2$. Since $D(t,s)$ is unknown, the estimation is based on the empirical covariance kernel
\beq
\label{dhat}
\hat{D}_n(t,s)=\frac{1}{n}\sum_{i=1}^n\big\{y_i^2(t)-\bar{y}^{(2)}_n(t)\big\}\big\{y_i^2(s)-\bar{y}^{(2)}_n(s)\big\},
\eeq
where $\bar{y}^{(2)}_n=\frac 1n\sum_{i=1}^ny_i^2$. The spectral decomposition of $\hat{D}_n(t,s)$ gives rise to the ordered empirical eigenvalues $\hat{\lambda}_1\geq \ldots\geq  \hat{\lambda}_n$ and the corresponding empirical eigenfunctions $\hat{\varphi}_1, \ldots, \hat{\varphi}_n$. Therefore, $\Phi_M$ may be replaced with the class of estimated eigenfunctions $\hat{\Phi}_M=\{\hat\varphi_1,\ldots,\hat\varphi_M\}$, with appropriately chosen $M$. The following result shows that consistency can be established also for the functional principal components approach to estimating the fGARCH(1,1) process.

\begin{cor}\label{rem-2/1}
If $\int\mathbb{E}[y_0^4(t)]dt<\infty$, then the result of Theorem \ref{cons-th} is retained if the random class $\hat{\Phi}_M$ is used in place of a deterministic class $\Phi_M$. 
\end{cor}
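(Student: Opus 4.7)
The plan is to reduce the random-basis case to the deterministic-basis case of Theorem \ref{cons-th} by showing that $\hat\Phi_M$ is close to a deterministic basis $\Phi_M$ (the eigenfunctions of the true covariance kernel $D$) with high enough accuracy that all quantities entering the least squares criterion converge to their deterministic-basis counterparts, uniformly in $\ubtheta\in\bTheta$.

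First, I would establish $\hat D_n\to D$ almost surely in the Hilbert--Schmidt norm on $L^2([0,1]^2)$. By Corollary \ref{rem-1} the sequence $(y_i^2\otimes y_i^2\colon i\in\mathbb{Z})$ is a stationary ergodic Bernoulli shift taking values in $L^2([0,1]^2)$, and the integrability assumption $\int\mathbb{E}[y_0^4(t)]\,dt<\infty$ gives $\mathbb{E}[\|y_0^2\otimes y_0^2\|_{L^2([0,1]^2)}]<\infty$. The strong law of large numbers for Bernoulli shifts in Hilbert space therefore yields $n^{-1}\sum_{i=1}^n y_i^2(t)y_i^2(s)\to\mathbb{E}[y_0^2(t)y_0^2(s)]$ a.s.\ in the Hilbert--Schmidt norm, and the centering term $\bar y_n^{(2)}\otimes\bar y_n^{(2)}$ handles itself by the same ergodic argument applied to $y_i^2$.

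Second, I would invoke standard perturbation results for compact self-adjoint operators (as in Lemmas 2.2 and 2.3 of Horv\'ath and Kokoszka, 2012) to obtain, for each $m=1,\dots,M$, the existence of signs $\hat c_m\in\{-1,+1\}$ such that $\|\hat c_m\hat\varphi_m-\varphi_m\|_2\to 0$ a.s., where $\varphi_1,\dots,\varphi_M$ are the population eigenfunctions of $D$. (One assumes the corresponding eigenvalues $\lambda_1,\dots,\lambda_M$ are distinct, which is standard for identifiability in the FPCA-based estimation of the fGARCH(1,1) process.) Absorbing the signs $\hat c_m$ into the parameters only changes the signs of rows/columns of $\bA$, $\bB$ and entries of $\bd$, so it is harmless for the consistency statement and I treat $\hat\varphi_m$ as converging to $\varphi_m$ in what follows.

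Third, I would transfer this convergence to the projections: for each $i$ and $m$,
\begin{equation*}
\bigl|\la y_i^2,\hat\varphi_m\ra-\la y_i^2,\varphi_m\ra\bigr|
\leq \|y_i^2\|_2\,\|\hat\varphi_m-\varphi_m\|_2\to 0\quad\text{a.s.,}
\end{equation*}
and, uniformly over $\ubtheta\in\bTheta$, the truncated recursion $\hat{\bs}_i^{(2)}(\ubtheta)$ depends continuously on the projections $\by_{i-\ell}^{(2)}$ with geometric weights $\|\ubB\|^{\ell-1}\le c_2^{\ell-1}$ by (A4). Combining this with the strong consistency argument developed for Theorem \ref{cons-th} (which exploits the ergodicity from Corollary \ref{rem-1}, compactness of $\bTheta$ in (A3), and the identifiability encoded in (A1)--(A2)), one shows that the criterion function $\hat S_n^{\hat\Phi}(\ubtheta)$ based on the empirical eigenfunctions differs from the criterion $\hat S_n^{\Phi}(\ubtheta)$ based on the population eigenfunctions by a term of order $\|\hat\varphi_m-\varphi_m\|_2$ (up to the ergodic average of $\|y_i^2\|_2^2$, which is finite and a.s.\ bounded by the SLLN), uniformly in $\ubtheta\in\bTheta$. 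A standard argmin continuity argument then yields $\hat{\ubtheta}_n\to\btheta$ a.s.

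The main obstacle, I expect, is controlling the propagation of the basis estimation error through the infinite-series representation $\hat{\bs}_i^{(2)}(\ubtheta)$ uniformly in $\ubtheta\in\bTheta$ and in $i\le n$. The key tool is the geometric decay $\|\ubB\|\le c_2<1$ from (A4) combined with a quantitative bound of the form $\sup_{i\le n}\|\by_i^{(2)}\|=o(n^{\kappa})$ a.s.\ for some small $\kappa$, which follows from $\mathbb{E}[\|y_0^2\|_2^2]<\infty$ and Borel--Cantelli. Once this uniform control is in place, the remainder of the proof is the same ergodic-theoretic identification argument used for Theorem \ref{cons-th}.
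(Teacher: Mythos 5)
Your proposal follows essentially the same route as the paper: consistency of the empirical covariance operator $\hat D_n$, perturbation bounds giving $\max_{1\le m\le M}\|\hat\varphi_m-\hat\zeta_m\varphi_m\|\to 0$ up to random signs under the eigenvalue separation $\lambda_1>\cdots>\lambda_M>\lambda_{M+1}$, and then a transfer of the Theorem~\ref{cons-th} argument to the estimated basis. The only substantive difference is that you derive $\hat D_n\to D$ almost surely from the ergodic theorem for Hilbert-space-valued Bernoulli shifts, whereas the paper deduces convergence in probability from the approximability in Corollary~\ref{rem-1}; your version is if anything slightly stronger and better matched to the claimed almost-sure conclusion, and your elaboration of the final transfer step (uniform control over $\bTheta$ via the geometric decay in (A4) and an argmin continuity argument) spells out what the paper compresses into a one-word ``Hence.''
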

The proof of Corollary \ref{rem-2/1} is given in Section \ref{sec:proofs}.


\section{Empirical results}
\label{sec:empirical}
\setcounter{equation}{0}


\subsection{Simulations}
\label{subsec:sim}

\begin{figure}[ht]
        \centering
        \caption{Five consecutive fGARCH(1,1) observations (read row-wise starting from the top-left) generated from model \eqref{sim-2}--\eqref{sim-3}.}
        \label{sim-data}
        \begin{minipage}{.5\textwidth}
            \centering
            \includegraphics[width=.999\linewidth]{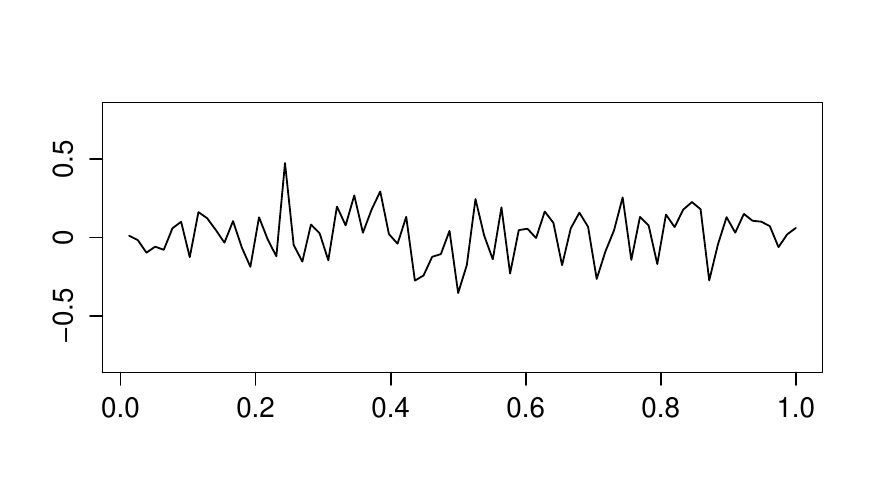}
        \end{minipage}%
       \begin{minipage}{.5\textwidth}
           \centering
           \includegraphics[width=.999\linewidth]{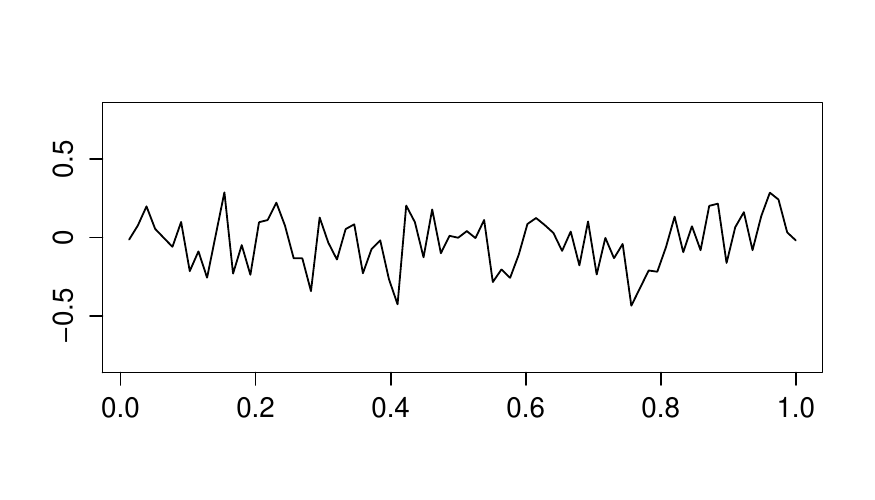}
       \end{minipage}
 \begin{minipage}{.5\textwidth}
            \centering
            \includegraphics[width=.999\linewidth]{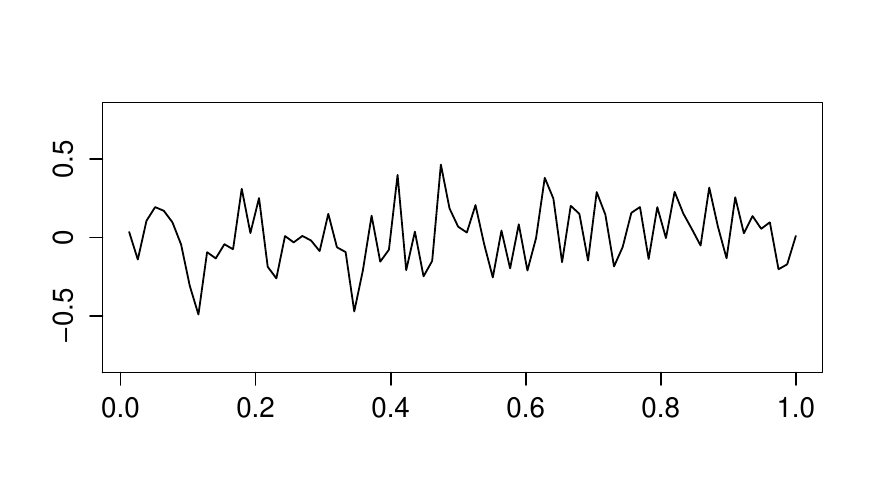}
        \end{minipage}%
       \begin{minipage}{.5\textwidth}
           \centering
           \includegraphics[width=.999\linewidth]{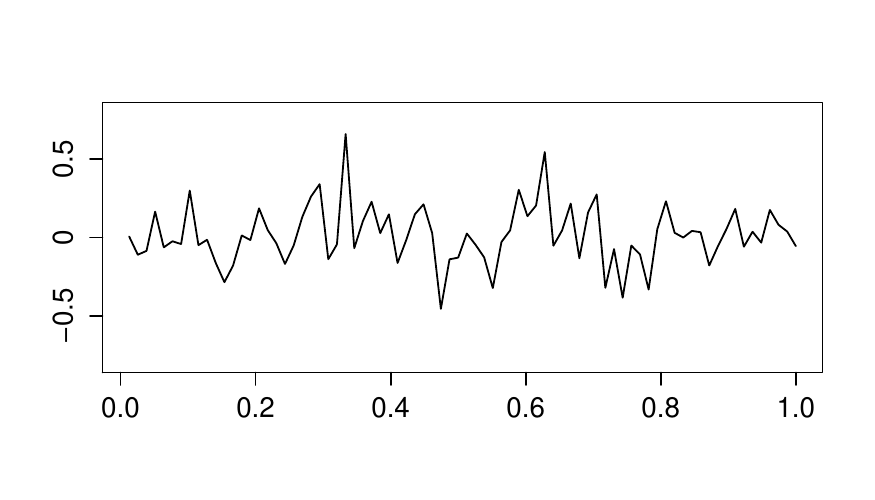}
       \end{minipage}
    \begin{minipage}{.5\textwidth}
           \centering
           \includegraphics[width=.999\linewidth]{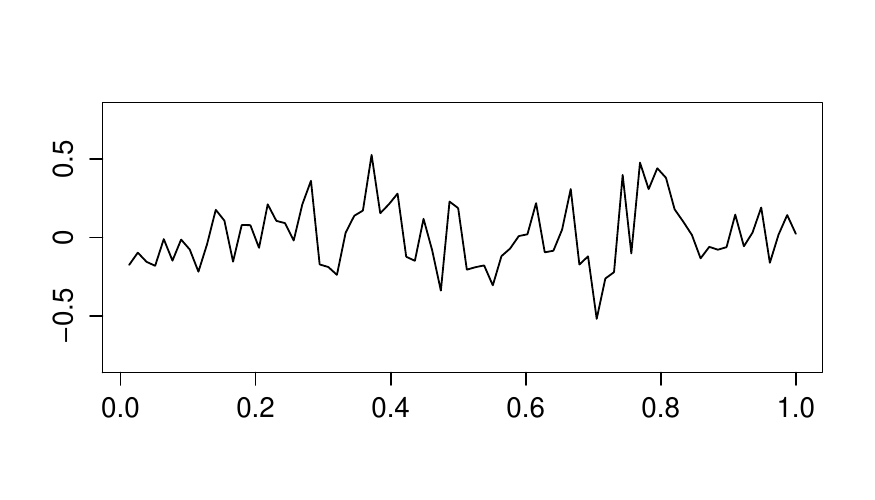}
        \end{minipage}
\end{figure}

In order to investigate the performance of the least squares procedure in finite samples, functional data was generated from the fGARCH(1,1) process $(y_i\colon i\in\mathbb{Z})$ whose volatility function \eqref{def-2} is specified through the constant intercept function $\delta$ given by 
\beq\label{sim-2}
\delta(t)=0.01,
\qquad t \in [0,1],
\eeq
and the operators $\alpha$ and $\beta$ defined through their kernel functions
\beq\label{sim-1}
\alpha(t,s)=\beta(t,s)=12t(1-t)s(1-s),
\qquad t,s\in [0,1].
\eeq
The innovations $(\vare_i\colon i\in\mathbb{Z})$ appearing in \eqref{def-1} and \eqref{def-2} are defined as
\beq\label{sim-3}
\vare_i(t)=\frac{\sqrt{\log 2}}{2^{200t}}B_i\bigg(\frac{2^{400t}}{\log 2}\bigg),
\qquad t \in[0,1],
\eeq
where $(B_i\colon i \in\mathbb{Z})$ are independent and identically distributed standard Brownian motions. These were constructed from the equi-spaced grid $j/285$, $j=1,\ldots,285$. Lengthy but elementary calculations show that \eqref{cond-3} holds in case of \eqref{sim-2}--\eqref{sim-3} and therefore $(y_i\colon i\in\mathbb{Z})$ constitutes a strictly stationary process. The recursion in \eqref{def-2} was initialized with  $\sigma^2_1=\delta$ and the first 1000 simulated curves were discarded as burn-in values to get close to the stationary solution. Figure \ref{sim-data} displays the first five simulated functions after the burn-in.

Random samples of size $n=300$, $600$ and $1200$ were obtained from the data generating process in equations \eqref{sim-2}--\eqref{sim-3}. Note that in this setting $\alpha(t,s)=\beta(t,s)=0.4\!\cdot\!\varphi_1(t)\varphi_1(s)$ with the normalized function $\varphi_1$ given by
\[
\varphi_1(t)=\sqrt{30}t(1-t),
\qquad t\in[0,1].
\] 
Equation \eqref{pare-1} shows then that the operator-related parameters to estimate are $a_{11}=b_{11}=0.4$. Note that the intercept function $\delta\equiv 0.01$ is not in the span of $\varphi_1$ and therefore does not satisfy \eqref{pare-0}, so that the corresponding least squares estimate becomes a proxy for $d_1=\la \delta, \varphi_1\ra=\sqrt{30}/600\approx .009$. The estimation procedure was run $1000$ times with $M=1$ yielding the outcomes reported in Table \ref{tab-1}. 

An alternative method for estimating $\delta$ can be obtained as follows. Observing that if $\hat{\alpha}$ and $\hat{\beta}$ are asymptotically consistent estimators for $\alpha$ and $\beta$ in the $L^2$-sense, then
\[
\tilde{\delta}_n=\bar{y}^{(2)}_n-(\hat\alpha+\hat\beta)\bar{y}^{(2)}_n
\]
is an asymptotically consistent estimator for $\delta$, where $\bar{y}^{(2)}_n=\frac{1}{n}\sum_{i=1}^ny_i^2$. In the present simulation setting, there was little difference between $\hat{\delta}=\hat{d}_1\varphi$ and $\tilde{\delta}$, so only results for the first choice are presented here.

\begin{table}[htp]
\label{tab-1}
\caption{Estimates $\hat d_1$, $\hat a_{11}$ and $\hat b_{11}$ with $M=1$ and $\varphi_1(t)=\sqrt{30}t(1-t)$ in model \eqref{sim-2}--\eqref{sim-3} for different sample sizes $n$, with sample standard deviations given in brackets. The row labeled by $\infty$ shows the population values.}
\vspace{.2cm}
\centering
\begin{tabular}{clll}
\multicolumn{1}{c}{$n$} & \multicolumn{1}{c}{$\hat{d}_1$} & \multicolumn{1}{c}{$\hat{a}_{11}$} & \multicolumn{1}{c}{$\hat{b}_{11}$}\tabularnewline
\hline
300 & 0.013 (0.003) & 0.420 (0.058) & 0.306 (0.086) \tabularnewline
600 & 0.011 (0.002) & 0.412 (0.042) & 0.344 (0.064) \tabularnewline
1200 & 0.010 (0.001) & 0.408 (0.028) & 0.369 (0.045) \tabularnewline
\hline
$\infty$ & 0.009 & 0.400 & 0.400 \tabularnewline
\end{tabular}
\end{table}

\begin{figure}[h!]
        \centering
        \caption{Plots of $\alpha(t,s)=\beta(t,s)$ of \eqref{sim-1} (top panel) and corresponding estimates $\hat{\alpha}_{1000}(t,s)$ and $\hat{\beta}_{1000}(t,s)$ (lower panel) based on the data generating process \eqref{sim-2}--\eqref{sim-3}.}\label{sim-alpha/beta}
        \begin{minipage}{.5\textwidth}
            \centering
            \includegraphics[width=.999\linewidth]{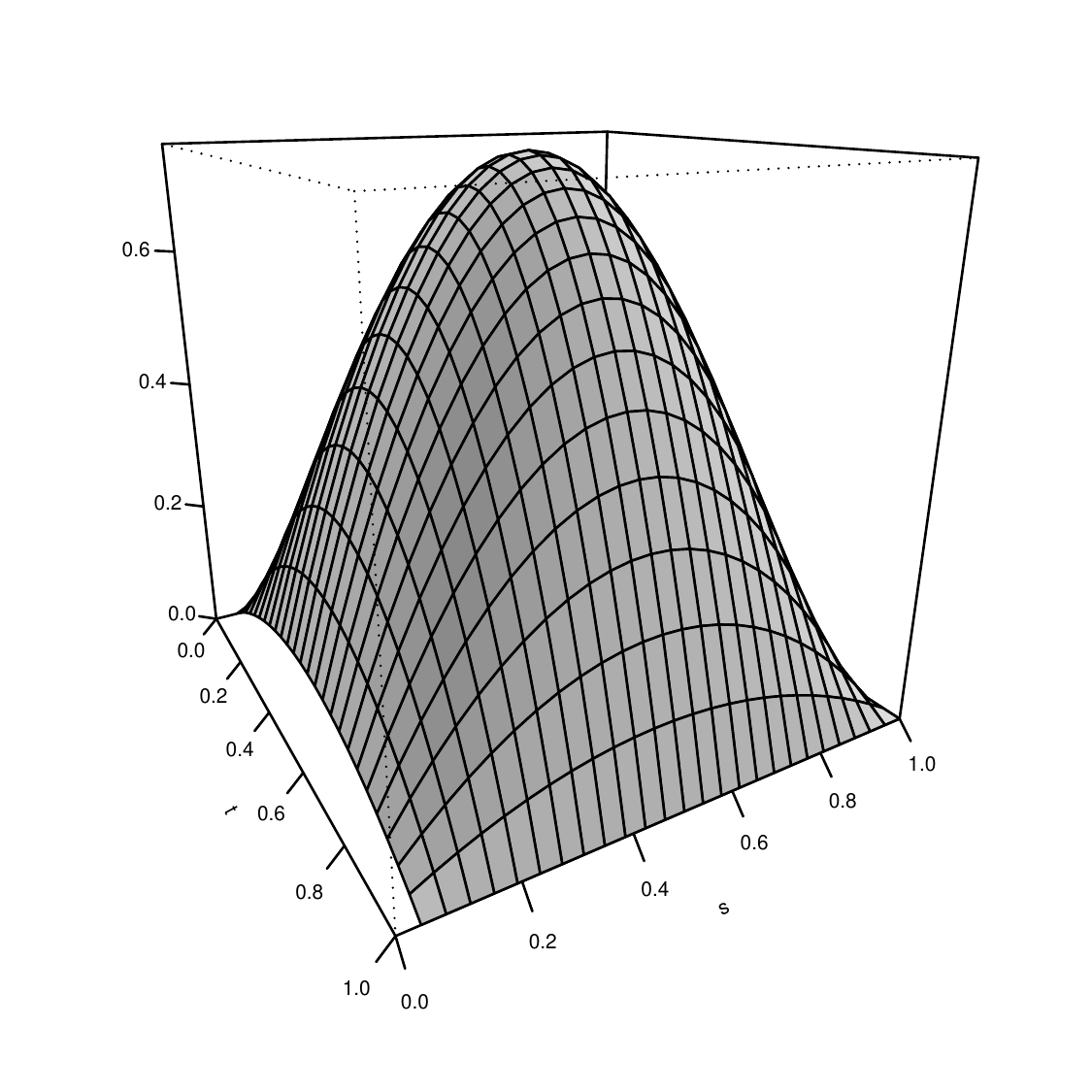}
        \end{minipage}\\%
 \begin{minipage}{.5\textwidth}
            \centering
            \includegraphics[width=.999\linewidth]{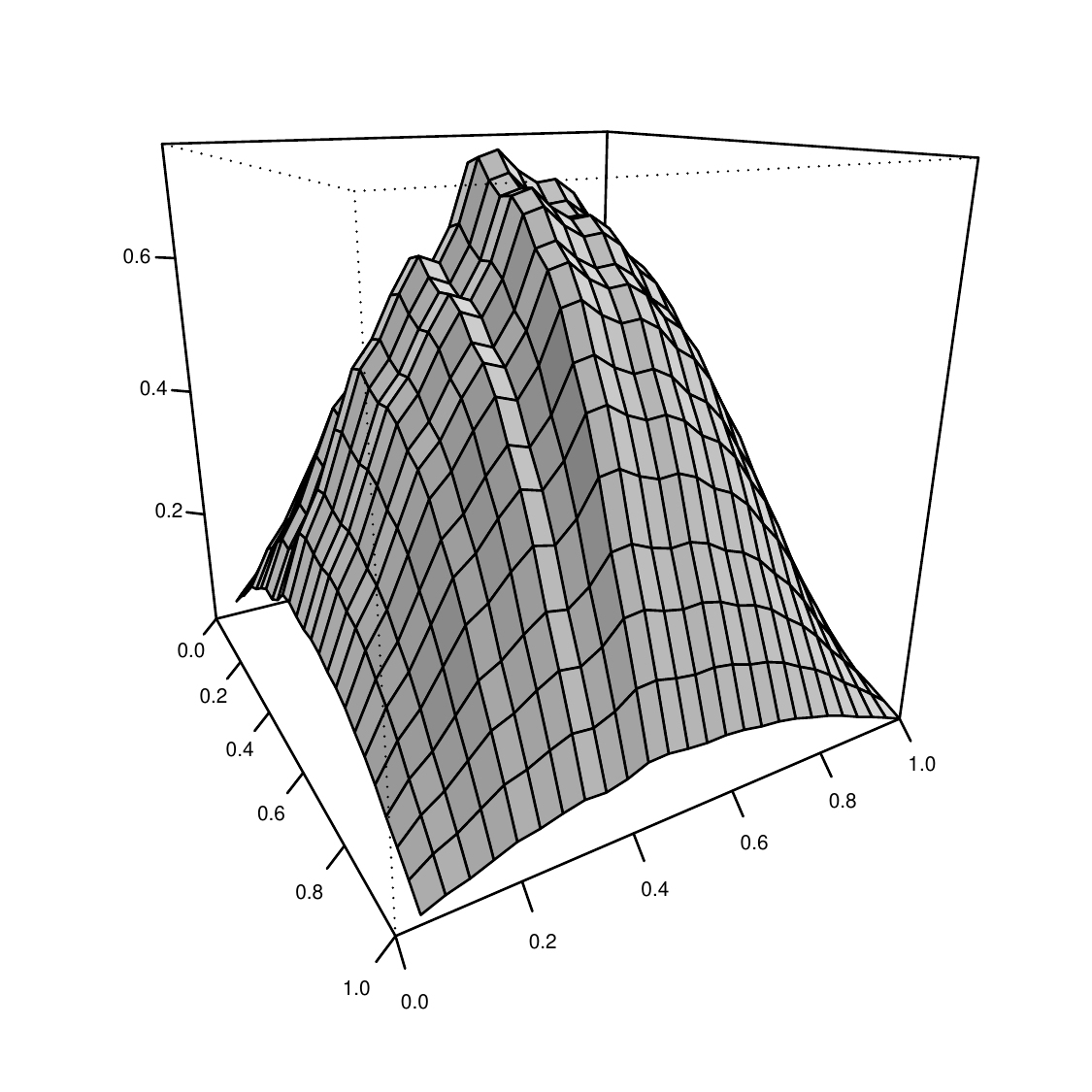}
        \end{minipage}%
       \begin{minipage}{.5\textwidth}
           \centering
           \includegraphics[width=.999\linewidth]{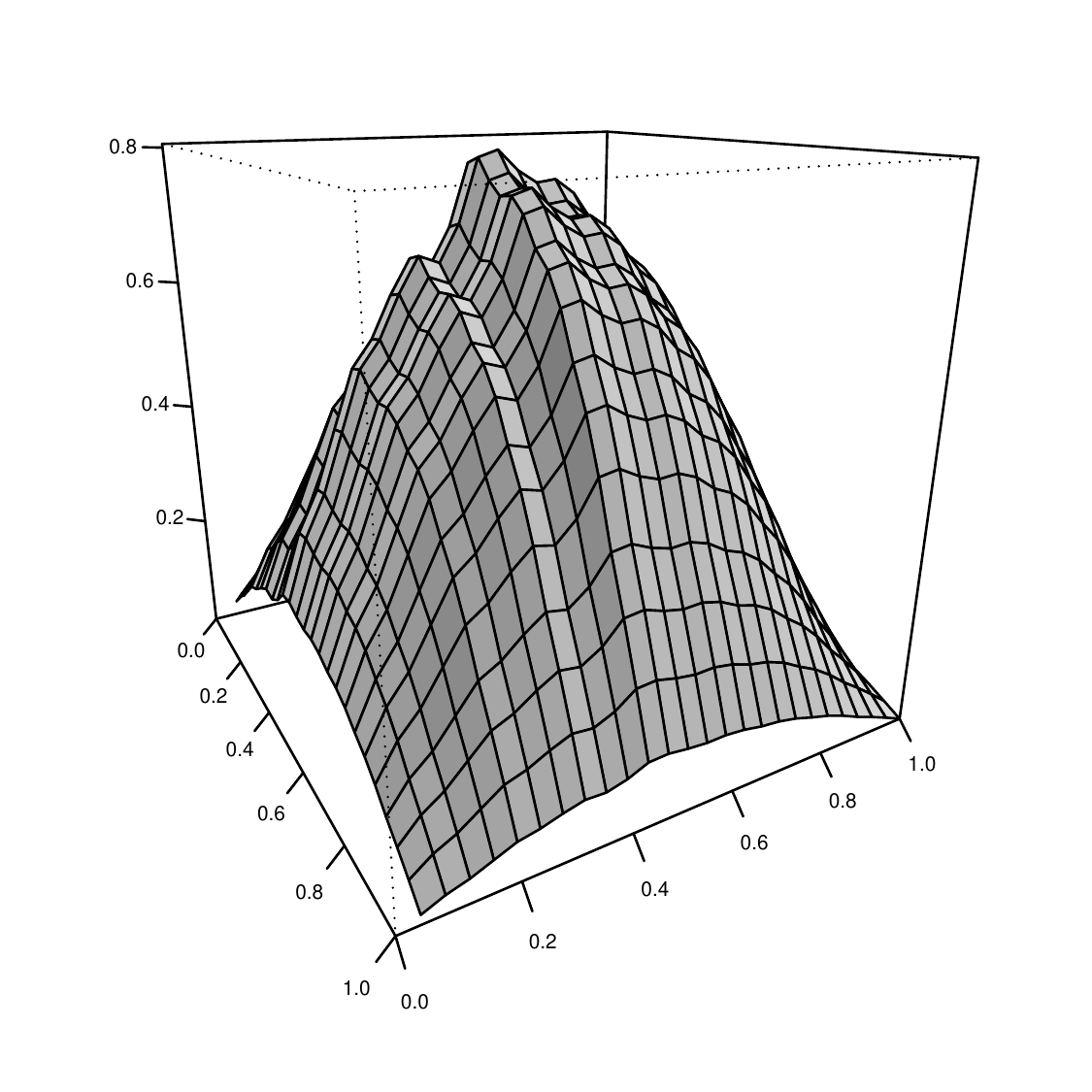}
       \end{minipage}
\end{figure}

To illustrate the estimation method using estimated functional principal components, whose large-sample behavior is given in Corollary \ref{rem-2/1}, $n=1000$ functional observations were generated from the fGARCH(1,1) model \eqref{sim-2}--\eqref{sim-3}. The largest eigenvalue of $\hat{D}_{1,000}$, the empirical covariance kernel defined in \eqref{dhat}, explains about 70\% of the variability in the squared functions $y_i^2$, while the second-largest eigenvalue only accounts for 4\%. First, the estimation procedure was applied using only the first empirical eigenfunction $\hat{\varphi}_1$. The resulting estimated kernels $\hat\alpha_{1000}(t,s)$ and $\hat\beta_{1000}(t,s)$ are shown in Figure \ref{sim-alpha/beta} along with the population kernels $\alpha(t,s)=\beta(t,s)$ given in \eqref{sim-3}. The experiment was then repeated using two or more eigenfunction but no qualitative improvement in the shape of the estimators $\alpha_{1000}(t,s)$ and $\beta_{1000}(t,s)$ was found. 


\subsection{Application}
\label{subsec:app}

\begin{figure}[ht]
        \centering
        \caption{Plots of 5-minute log-returns of the SPY ETF from July 21 and 24--27, 2006 (read row-wise starting from the top-left).}\label{exch}
        \begin{minipage}{.5\textwidth}
            \centering
            \includegraphics[width=.999\linewidth]{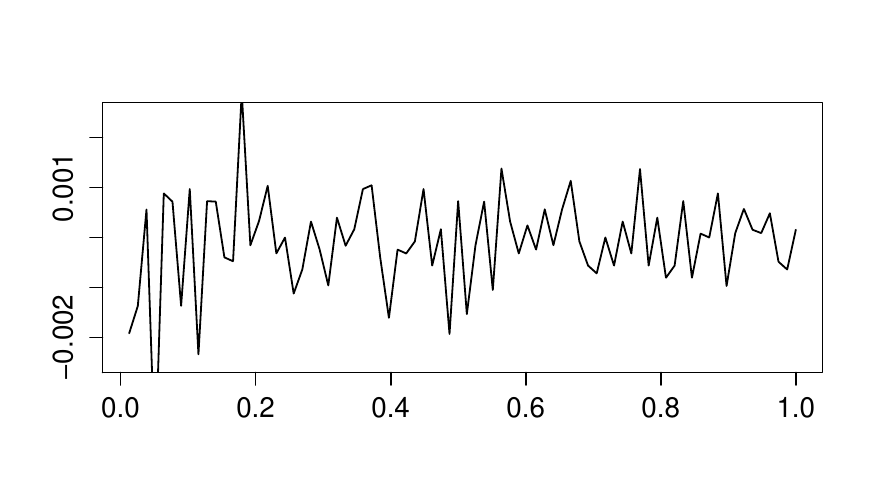}
        \end{minipage}%
       \begin{minipage}{.5\textwidth}
           \centering
           \includegraphics[width=.999\linewidth]{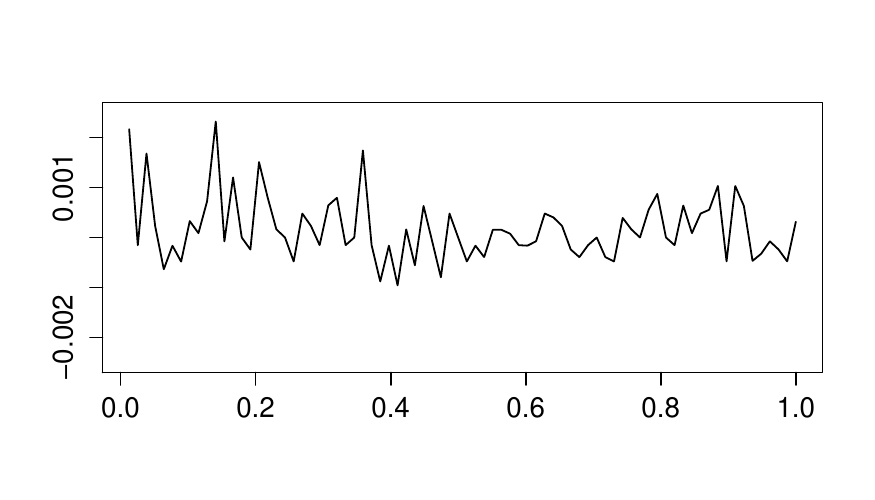}
       \end{minipage}
 \begin{minipage}{.5\textwidth}
            \centering
            \includegraphics[width=.999\linewidth]{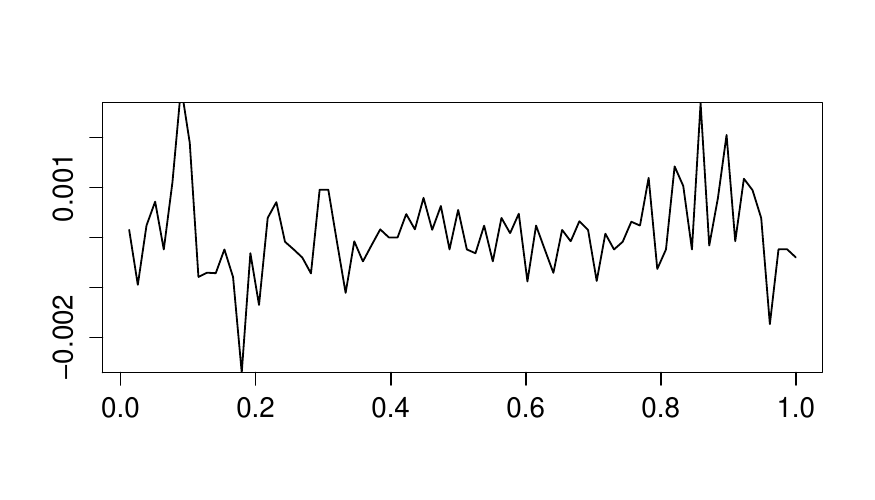}
        \end{minipage}%
       \begin{minipage}{.5\textwidth}
           \centering
           \includegraphics[width=.999\linewidth]{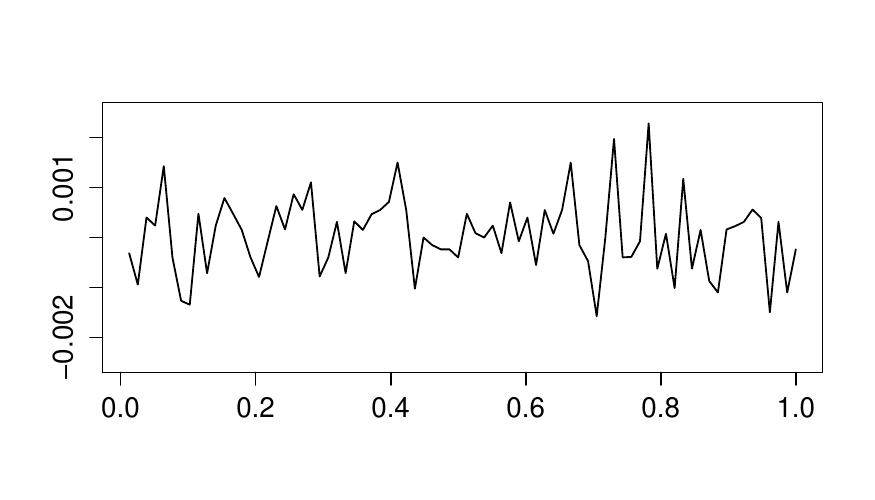}
       \end{minipage}
    \begin{minipage}{.5\textwidth}
           \centering
           \includegraphics[width=.999\linewidth]{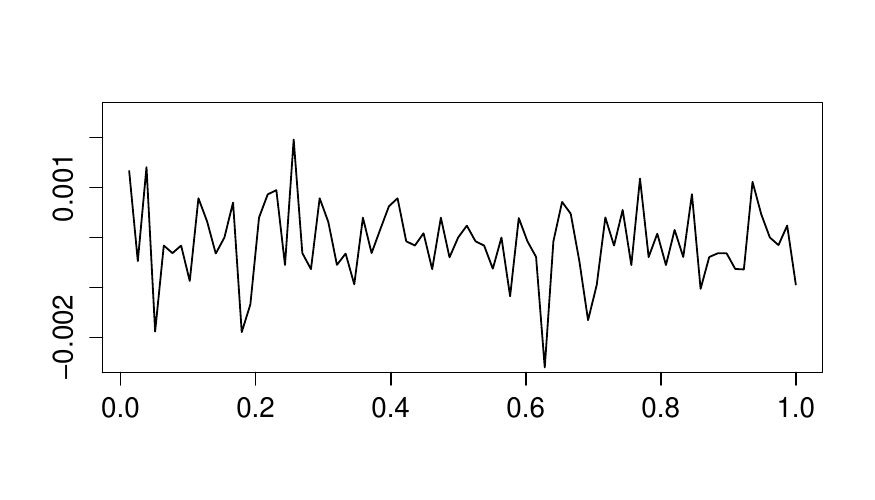}
        \end{minipage}
\end{figure}

The data example considered in this subsection is on intra-day log-returns of an exchange-traded fund (ETF), namely the popular SPDR S\&P 500 EDF (SPY) designed to track the S\&P 500 index. Let then $p_i(t)$ denote the value of the SPY ETF on day $i$ at time $t$. Following Cyree et al.\ (2004) one can set up the intra-day log-returns
\[
y_i(t)=\log p_i(t)-\log p_i(t-h),
\]
where $h$ is a time window typically corresponding to 1, 5, or 15 minutes. In the following, only $h=5$ and therefore 5-minute log-returns are considered. This is done for convenience because the price of the SPY ETF is recorded in five-minute resolution, for example, at the website {\tt https://quantquote.com}, and the 5-minute log-return process $(y_i)$ becomes easily computable. Since the SPY trades from 9:30am to 4:00pm, there are 78 measurements per day. The resulting high-dimensionality of the intra-day observations make the application of any multivariate ARCH or GARCH model impractical and a functional alternative is suggested in the following. Note that the volatility of the log-returns on day $i$ at time $t$ is then represented by 
\[
\sigma^2_i(t)=\mathrm{Var}(y_i(t)|{\mathcal F}_{i-1}),
\]
where ${\mathcal F}_{i-1}$ denotes the the available information until the end of day $i-1$. Figure \ref{exch} shows the graphs of the 5-minute log-returns of the SPY data for five consecutive trading days in July 2006.
\begin{figure}[ht]
        \centering
        \caption{Estimates for intercept function $\delta$ (top), and kernels $\alpha(t,s)$ (bottom left) and $\beta(t,s)$ (bottom right) based on the SPY ETF data.}\label{SPY-est}
        \begin{minipage}{.5\textwidth}
           \centering
           \includegraphics[width=.999\linewidth]{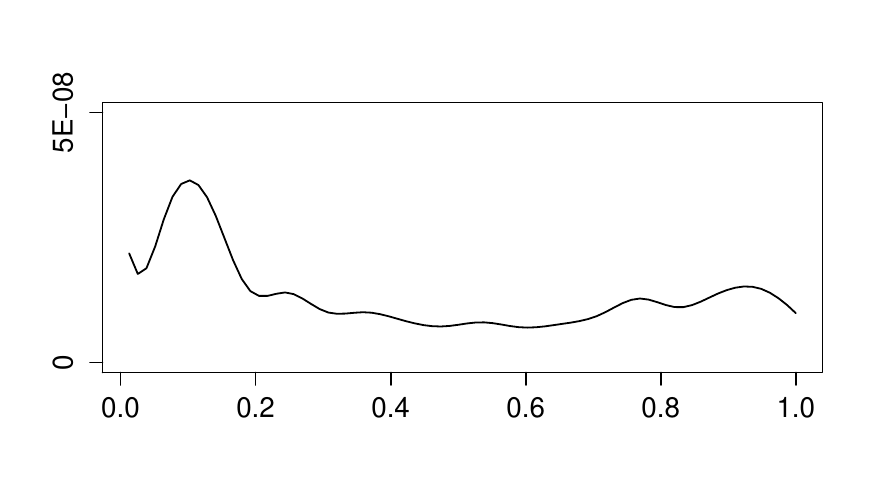}
        \end{minipage}
        \begin{minipage}{.5\textwidth}
            \centering
            \includegraphics[width=.999\linewidth]{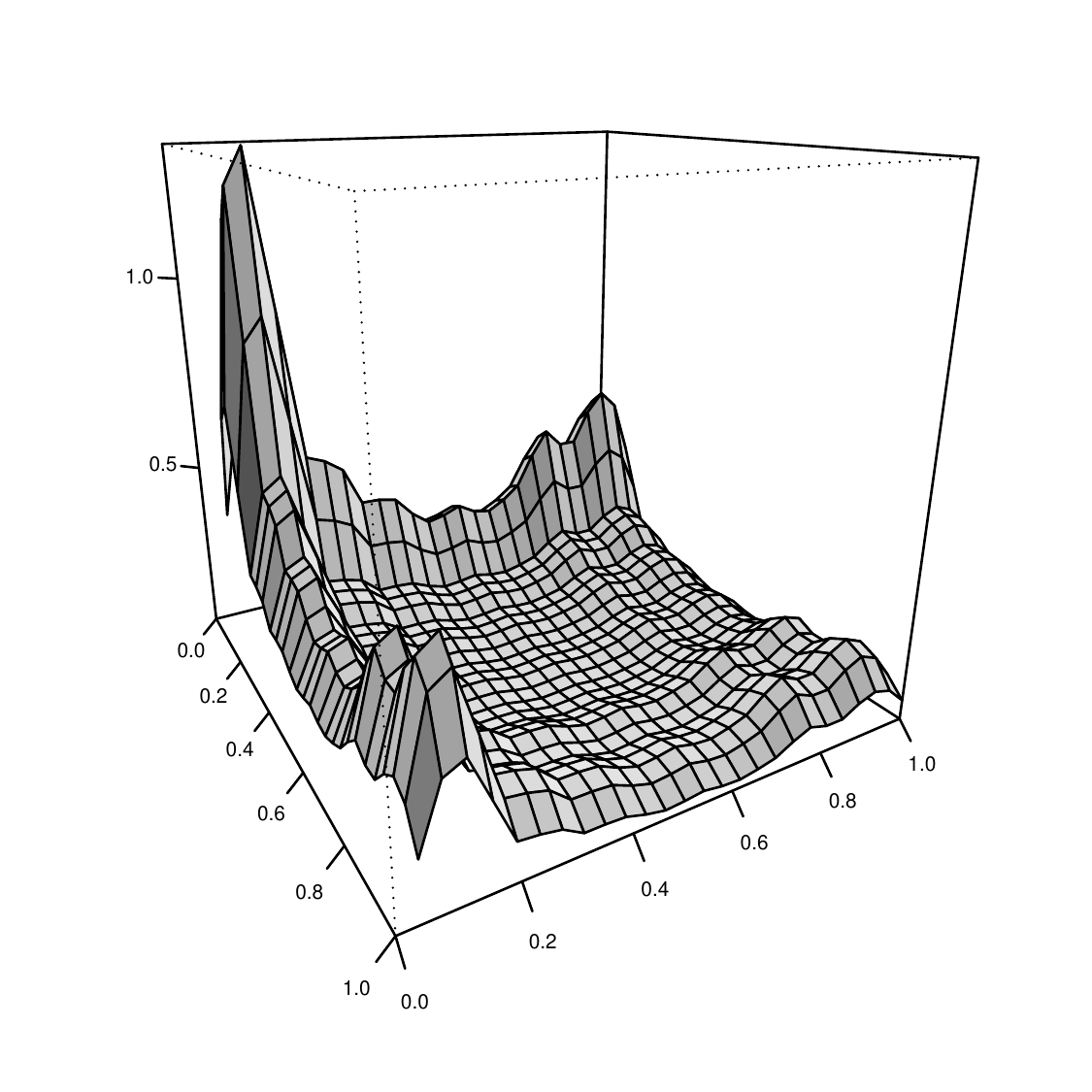}
        \end{minipage}%
       \begin{minipage}{.5\textwidth}
           \centering
           \includegraphics[width=.999\linewidth]{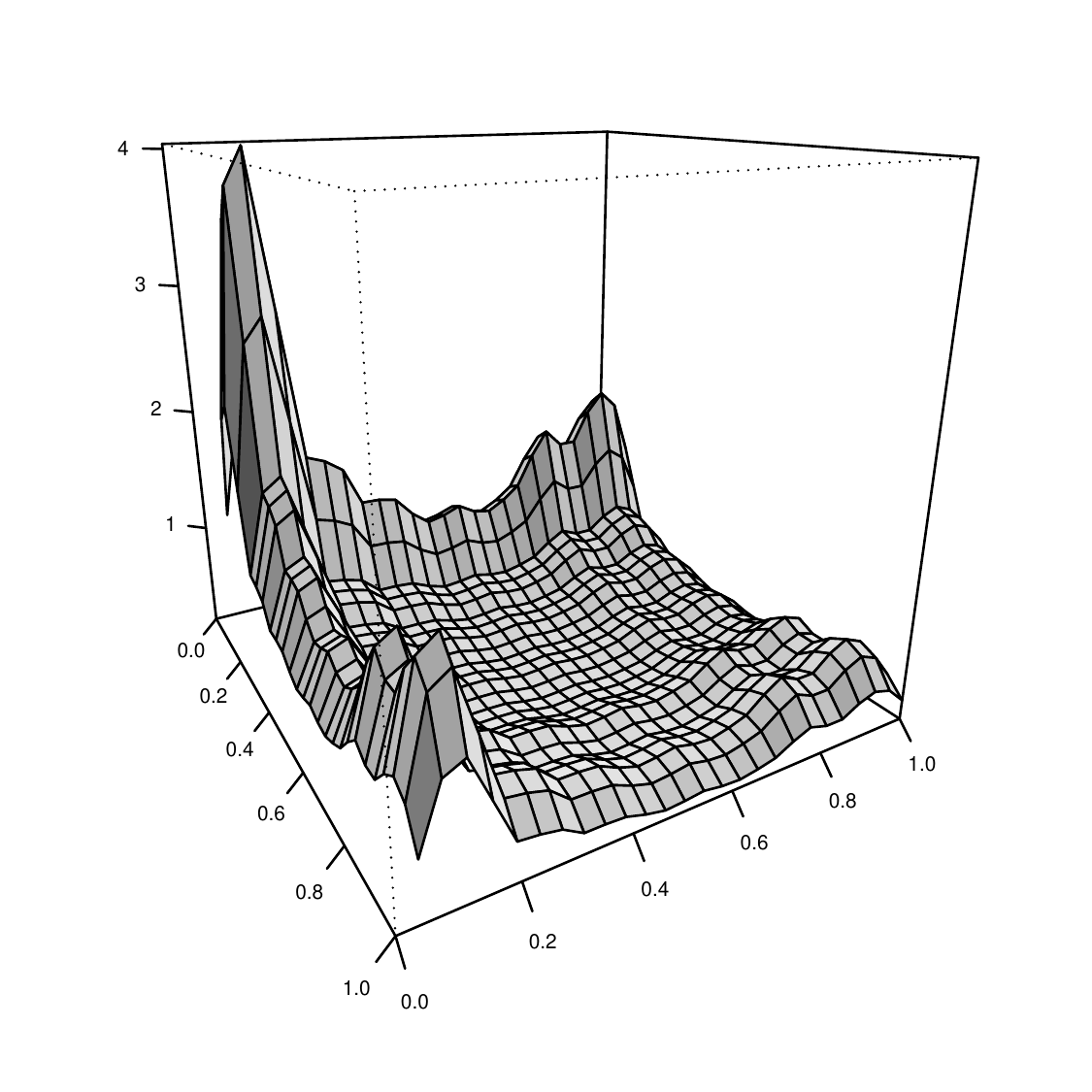}
       \end{minipage}
\end{figure}

The full data set considered here covers the trading days between January 2, 2003 and December 30, 2006. SPY is not traded on weekends and bank holidays. There were eight half trading days during this four year span and due to possibly different market behavior these eight curves were excluded from the analysis, resulting in the fitting of an fGARCH(1,1) process to $n=999$ functional observations. The least squares estimation procedure was carried out using $\Phi_M$ consisting of the empirical eigenfunctions associated with the $M$ largest eigenvalues of the covariance operator $\hat{D}_{999}$. Figure \ref{SPY-est} shows the estimates for the intercept function $\delta$ and integral kernels corresponding to the operators $\alpha$ and $\beta$ for the case $M=1$. Note that the largest eigenvalue explains 56\% of the variation. As in the simulated data example, increasing the number of eigenfunctions used, did not qualitatively improve the estimators. It should be noted that $\hat\delta$ has a global maximum shortly after trading opens in the morning and a further local maximum at the end of the day, corroborating similar findings in other papers.

\section{Proofs}
\label{sec:proofs}

\subsection{Proofs of Section \ref{sec:structure} results}

Using \eqref{def-1}, the volatility equation \eqref{def-2} can, for any $t\in[0,1]$, be written as
\begin{equation}\label{ga-def}
\sigma_i^2(t)=\delta(t)+\int \gamma_{i-1}(t,s)\sigma_{i-1}^2(s)ds,
\end{equation}
with integral kernel
\[
\gamma_{i-1}(t,s)=\alpha(t,s)\vare_{i-1}^2(s)+\beta(t,s).
\]
Iterating \eqref{ga-def} backward yields
 \begin{align*}
 \sigma_i^2(t)&=\delta(t)+\int \gamma_{i-1}(t,s_1)\left[\delta(s_1)+\int \gamma_{i-2}(s_1,s_2)\sigma^2_{i-2}(s_2)ds_2\right]ds_1\\
 &=\delta(t)+\int \gamma_{i-1}(t,s_1)\delta(s_1)ds_1+\intt \gamma_{i-1}(t,s_1)\gamma_{i-2}(s_1,s_2)\sigma_{i-2}^2(s_2)ds_2ds_1\\
 &=\delta(t)+\int \gamma_{i-1}(t,s_1)\delta(s_1)ds_1+\intt \gamma_{i-1}(t,s_1)\gamma_{i-2}(s_1,s_2)\delta(s_2)ds_2ds_1\\
 &\phantom{=\delta(t)}+\inttt \gamma_{i-1}(t,s_1)\gamma_{i-2}(s_1,s_2)\gamma_{i-3}(s_2,s_3)\sigma_{i-3}^2(s_3)ds_3ds_2ds_1,
 \end{align*}
 and so on. For $i\in\mathbb{Z}$, let $\Gamma_{i,0}$ denote the identity operator and, for $i\in\mathbb{Z}$ and $k\in\mathbb{N}$, let the random integral operators $\Gamma_{i,k}$ be defined by the equation 
\begin{displaymath}
(\Gamma_{i,k}\,x)(t)=
\int\cdots\int\gamma_{i-1}(t,s_1)\gamma_{i-2}(s_1,s_2)\cdots\gamma_{i-k}(s_{k-1},s_k)x(s_k)ds_k\cdots ds_2ds_1.
\end{displaymath}
Then, it follows that, for any $m\in\mathbb{N}$, 
\beq\label{s-m-def}
\sigma_i^2
=\sum_{k=0}^{m-1}\Gamma_{i,k}\,\delta+\Gamma_{i,m}\,\sigma_{i-m}^2.
\eeq
It can be seen from the definition of $\Gamma_{i,k}$ that
\begin{align}\label{ope}
\Gamma_{i,k}=\Gamma_{i,1}\circ\Gamma_{i-1,1}\circ\cdots\circ\Gamma_{i-k+1,1},
\end{align}
where $\circ$ denotes the composition of operators. The foregoing identifies 
\beq\label{s-def}
\sigma_i^2=\sum_{k=0}^\infty\Gamma_{i,k}\,\delta,
\eeq
as candidate for the nonanticipative solution of \eqref{def-1} and \eqref{def-2}. It remains to be shown that the sum on the right-hand side of equation \eqref{s-def} is well defined with probability one. This will be verified in the following.


\begin{proof}[Proof of Theorem \ref{l-2-th}]
(i) Consider the convergence of the sum in \eqref{s-def} in $L^2$, the space of square integrable functions on $[0,1]$. Since
\[
(\Gamma_{i,1}\,x)(t)=\int \gamma_{i-1}(t,s)x(s)ds
\]
defines an integral operator in $L^2$, it follows from Riesz and Sz.--Nagy (1990, p.\ 148)  that
\beq\label{sz-1}
\|\Gamma_{i,1}\|_{\cN}\leq \|\gamma_{i-1}\|_2,
\eeq
where $\|\cdot\|_{\cN}$ denotes the operator norm on $L^2$. Using representation \eqref{ope} and inequality \eqref{sz-1} leads to
\beq\label{op-2}
\|\Gamma_{i,k}\|_{\cN}\leq \prod_{\ell=1}^{k}\|\gamma_{i-\ell}\|_2.
\eeq
(The usual arrangement $\prod_{i\in \emptyset}=1$ applies here and in the following.)  Since $\|\gamma_\ell\|_2$ are independent and identically distributed random variables, the strong law of large numbers implies that 
\beq\label{ss-1}
\frac{1}{k}\sum_{\ell=1}^k\log\|\gamma_{i-\ell}\|_2
\to
\mathbb{E}\left[\log\|\gamma_{0}\|_2\right]
\qquad\mbox{a.s.}
\eeq
Hence, in view of \eqref{cond-3}, 
\[
P\bigg\{ \sum_{k=0}^\infty\Gamma_{i,k}\,\delta\in L^2[0,1]\bigg\}=1.
\]
It is easy to see that that the infinite sum on the right side of \eqref{s-def} satisfies \eqref{ga-def} and hence \eqref{def-1} and \eqref{def-2}. To prove the uniqueness of the solution to \eqref{def-1} and \eqref{def-2} assume that $\bar{\sigma}_i^2$ is another such solution. Using the backward recursions as in \eqref{s-m-def} shows that, for all $m\in\mathbb{N}$,
\beq\label{uni-1}
\bar{\sigma}_i^2
=\sum_{k=0}^{m-1}\Gamma_{i,k}\,\delta+\Gamma_{i,m}\,\bar{\sigma}_{i-m}^2.
\eeq
Hence, for all $m\in\mathbb{N}$,
\begin{align}\label{uni-2}
\bigg\| \bar{\sigma}_i^2-\sum_{k=0}^\infty\Gamma_{i,k}\,\delta\bigg\|_2
\leq 
\bigg\|\sum_{k=m}^\infty\Gamma_{i,k}\,\delta\bigg\|_2
+\left\|\Gamma_{i,m}\,\bar{\sigma}_{i-m}^2\right\|_2.
\end{align}
Assumption \eqref{cond-3} implies for the first norm on the right-hand side of the latter equation that, as $m\to\infty$,
\[
\bigg\|\sum_{k=m}^\infty\Gamma_{i,k}\,\delta\bigg\|_2
\leq \sum_{k=m}^\infty\|\Gamma_{i,k}\|_{\cN}\|\delta\|_2\to 0
\qquad\mbox{a.s.}
\]
Similarly,
\[
\left\|\Gamma_{i,m}\bar{\sigma}_{i-m}^2\right\|_2
\leq \|\Gamma_{i,m}\|_{\cN}\|\bar{\sigma}_{i-m}^2\|_2\stackrel{P}{\to}0,
\]
which follows since $\|\Gamma_{i,m}\|_{\cN}\to 0$ a.s.\ and, due to the weak stationarity of $(\bar{\sigma}_{i}\colon i\in\mathbb{Z})$, the sequence $(\|\bar{\sigma}_{i-m}^2\|_2\colon i\in\mathbb{Z})$ is bounded in probability. Therefore
\[
P\bigg\{\bigg\| \bar{\sigma_i}- \sum_{k=0}^\infty\Gamma_{i,k}\,\delta \bigg\|_2=0\bigg\}=1,
\]
which proves the first part of the theorem.

(ii) Note that
\[
\|\sigma_0^2\|_2
\leq \sum_{k=0}^\infty\|\Gamma_{0,k}\,\delta\|_2
\leq \|\delta\|_2 \sum_{k=0}^\infty\|\Gamma_{0,k}\|_{\cN}
\leq \|\delta\|_2 \sum_{k=0}^\infty\prod_{\ell=1}^k\|\gamma_{0-\ell}\|_2.
\]
Consequently Minkowski's inequality (cf.\ Hardy et al., 1959, pp.\ 24--26) implies that, for $\nu\geq 1$,
\begin{align*}
\left(\mathbb{E}[\|\sigma_0^2\|_2^\nu]\right)^{1/\nu}
\leq \|\delta\|_2\sum_{k=0}^\infty\bigg(\mathbb{E}\bigg(\prod_{\ell=1}^k\|\gamma_{0-\ell}\|_2\bigg)^\nu\bigg)^{1/\nu}
\leq \|\delta\|_2 \sum_{k=0}^\infty \left(\mathbb{E}[\|\gamma_{0}\|_2^\nu]\right)^{k/\nu}
\end{align*}
and, for $0<\nu\leq 1$,
\begin{align*}
\mathbb{E}[\|\sigma_0^2\|_2^\nu]
\leq \|\delta\|_2^\nu \sum_{k=0}^\infty \left(\mathbb{E}[\|\gamma_{0}\|_2^\nu]\right)^k
\end{align*}
The second part of the theorem is proved.
\end{proof}


\begin{proof}[Proof of Corollary \ref{rem-1}]
(i) The Bernoulli shift representation in \eqref{ber-1} is an immediate consequence of \eqref{s-def}.

(ii) Ergodicity is implied by the work of Stout (1974).

(iii) The statement follows immediately from (iv).

(iv) For $j\in\mathbb{N}$, let
\begin{displaymath}
\gamma_{i,\ell,i-\ell-j}(t,s)=\alpha(t,s)\{\vare^{(i)}_{i,\ell,i-\ell-j}(s)\}^2+\beta(t,s)
\end{displaymath}
and, for $k\geq \ell$, let the operators $\Gamma_{i,\ell,k}$ be defined through
\begin{align*}
(\Gamma_{i,\ell,k}\,x)(t)=\int\cdots\int
&\gamma_{i-1}(t,s_1)\gamma_{i-2}(s_1,s_2)\cdots\gamma_{i-\ell-1}(s_{\ell-2},s_{ \ell-1})\\
&\times\gamma_{i,\ell, i-\ell}(s_{\ell-1},s_{ \ell})\cdots\gamma_{i,\ell, i-k}(s_{k-1},s_k)x(s_k)ds_kds_{k-1}\cdots ds_1.
\end{align*}
Setting 
\[
\sigma^2_{i,\ell}
=\sum_{k=0}^{\ell-1}\Gamma_{i,k}\,\delta+\sum_{k=\ell}^\infty\Gamma_{i,\ell,k}\,\delta,
\]
it can be seen that
\[
\mathbb{E}\left[\|\sigma^2_{i}-\sigma^2_{i,\ell}\|_2^\nu\right]
\leq 2^\nu\mathbb{E}\bigg[\bigg\| \sum_{k=\ell}^\infty\Gamma_{i,k}\,\delta\bigg\|_2^\nu\bigg]
+2^\nu\mathbb{E}\bigg[\bigg\|\sum_{k=\ell}^\infty\Gamma_{i,\ell,k}\,\delta \bigg\|_2^\nu\bigg].
\]
It follows from the proof of Theorem \ref{l-2-th} that
\[
E\bigg[\bigg\|\sum_{k=\ell}^\infty\Gamma_{i,k}\,\delta\bigg\|_2^\nu\bigg]
\leq\left\{
\begin{array}{l@{\quad}l}
 \displaystyle \|\delta\|_2^\nu\bigg(\sum_{k=\ell}^\infty \{\mathbb{E}[\|\gamma_0\|_2^\nu]\}^{k/\nu}\bigg)^\nu,&\mbox{if}~~\nu\geq 1, \\[.5cm]
 \displaystyle\|\delta\|_2^\nu\sum_{k=\ell}^\infty\{\mathbb{E}[\|\gamma_0\|_2^\nu]\}^{k}, &\mbox{if}~~0<\nu \leq  1.
 \end{array}
 \right.
\]
Observing that
\[
\mathbb{E}\bigg[\bigg\|\sum_{k=\ell}^\infty\Gamma_{i,k}\,\delta\bigg\|_2^\nu\bigg]
=\mathbb{E}\bigg[\bigg\|\sum_{k=\ell}^\infty\Gamma_{i,\ell,k}\,\delta\bigg\|_2^\nu\bigg]
\]
and that $E[\|\gamma_0\|_2^\nu]<1$ by assumption, the proof of \eqref{ber-2} is complete.
\end{proof}


\begin{proof}[Proof of Theorem \ref{sup-th}]
(i) Recall relations \eqref{s-m-def}--\eqref{s-def}. It follows from the definition of $\Gamma_{i,k}$ that
\[
\|\Gamma_{i,k}\,\delta\|_{\cC}
\leq \|\delta\|_{\cC}\prod_{\ell=1}^k\left\|\bar{\gamma}_{i-\ell}\right\|_{\cC}
\]
with $\bar{\gamma}_{j}(t)=\int\gamma_j(t,s)ds$. Hence
\[
\bigg\|\sum_{k=0}^\infty\Gamma_{i,k}\,\delta\bigg\|_\cC
\leq\|\delta\|_{\cC}\sum_{k=0}^\infty\prod_{\ell=1}^k\left\|\bar{\gamma}_{i-\ell}\right\|_{\cC}.
\]
The strong law of large numbers yields that
\[
\frac{1}{k}\sum_{\ell=1}^k\log \left\|\bar{\gamma}_{i-\ell}\right\|_{\cC}
\to\mathbb{E}\left[\log \left\|\bar{\gamma}_{0}\right\|_{\cC}\right]
\qquad\mbox{a.s.},
\]
implying immediately that, under assumption \eqref{cond-5}
 \[
 P\bigg\{\sum_{k=0}^\infty\Gamma_{i,k}\,\delta\in \cC[0,1]\bigg\}=1.
 \]
Thus the existence of a solution to \eqref{def-1} and \eqref{def-2} is proven.  Assume that $(\bar{\sigma}_i^2\colon i\in\mathbb{Z})$ is an other solution to \eqref{def-1} and \eqref{def-2} in $\cC[0,1]$. Since \eqref{uni-1} holds, if follows as in \eqref{uni-2} that, for all $m\in\mathbb{N}$,
\begin{align*}
\bigg\| \bar{\sigma_i}-\sum_{k=0}^\infty\Gamma_{i,k}\,\delta\bigg\|_\cC
\leq \bigg\|\sum_{k=m}^\infty\Gamma_{i,k}\,\delta\bigg\|_\cC
+\left\|\Gamma_{i,m}\,\bar{\sigma}_{i-m}^2\right\|_\cC.
\end{align*}
If  \eqref{cond-5} is  satisfied, then, as $m\to\infty$,
\[
\bigg\|\sum_{k=m}^\infty\Gamma_{i,k}\delta\,\bigg\|_\cC
\leq\|\delta\|_\cC\sum_{k=m}^\infty\|\Gamma_{i,k}\|_{\cC}\to 0
\qquad\mbox{a.s.},
\]
using that $(1/m)\log\|\Gamma_{i,m}\|_{\cC}\to\mathbb{E}[\log\|\bar{\gamma}_0\|_{\cC}]<0 $ with probability one.
By the assumed  weak stationarity of the function series $(\bar{\sigma}_{i}\colon i\in\mathbb{Z})$, the univariate sequence $(\|\bar{\sigma}_{i-m}^2\|_\cC\colon i\in\mathbb{Z})$ is bounded in probability and therefore
\[
\left\|\Gamma_{i,m}\,\bar{\sigma}_{i-m}^2\right\|_\cC
\leq \|\Gamma_{i,m}\|_{\cC}\|\bar{\sigma}_{i-m}^2\|_\cC
\stackrel{P}{\to} 0.
\]
Consequently,
\[
P\bigg\{\bigg\|\bar{\sigma_i}-\sum_{k=0}^\infty\Gamma_{i,k}\,\delta\bigg\|_\cC=0\bigg\}=1,
\]
and the uniqueness of the solution to \eqref{def-1} and \eqref{def-2} is established.

(ii) The proof is similar to the proof of part (ii) of Theorem \ref{l-2-th}. 
\end{proof}


\begin{proof}[Proof of Corollary \ref{rem-1/2}]
The proof follows along the line of arguments used to establish Remark \ref{rem-1} and the details are thus omitted.
\end{proof}


\subsection{Proofs of Section \ref{sec:estimation} results}

\begin{proof}[Proof of Theorem \ref{cons-th}]
First observe that, by \eqref{def-1}, $\la y_i^2, \varp_m\ra=\la \sigma_i^2\vare_i^2, \varp_m\ra= \la \sigma_i^2, \varp_m\ra+ \la \sigma_i^2(\vare_i^2-1), \varp_m\ra$. Due to the independence of $\vare_i$ and $\sigma_i, y_{i-\ell},\ell\in\mathbb{N}$, \eqref{rec-e-3} and the stationarity of $(y_i\colon i\in\mathbb{Z})$ imply that
\begin{align}\label{est-pr-1}
\mathbb{E}[(\mathbf{y}_i^{(2)}-\tilde{\mathbf{s}}_i^{(2)})^{\T}(\by_i^{(2)}-\tilde{\mathbf{s}}_i^{(2)})]
=\sum_{m=1}^ME\la \sigma_0^2(\vare_0^2-1), \varp_m\ra^2+M(\ubtheta),
\end{align}
where
\[
M(\ubtheta)
=\mathbb{E}[\{\bs_0^{(2)}-\tilde{\bs}_0^{(2)}(\ubtheta)\}^\T\{\bs_0^{(2)}-\tilde{\bs}_0^{(2)}(\ubtheta)\}], 
\qquad\ubtheta\in \bTheta.
\]
Since $M(\ubtheta)\geq 0$, $M(\btheta)=0$ and the first term on the right-hand side of \eqref{est-pr-1} is constant as a function of $\ubtheta$, it suffices to show that
\beq\label{est-pr-2}
M(\ubtheta)>0
\qquad\mbox{if}\quad
\ubtheta\neq \btheta.
\eeq
Assume now that \eqref{est-pr-2} does not hold, so that there is $\btheta_*\neq \btheta$, $\btheta_*\in\bTheta$ such that $M(\btheta_*)=0$. But this means that $\bs_0^{(2)}=\tilde{\bs}_0^{(2)}(\btheta_*)$ with probability one. Using \eqref{rec-e-2} and \eqref{rec-e-2/3}, it can be concluded that
\beq\label{est-pr-3}
\sum_{\ell=1}^\infty \bB^{\ell-1}\bd+\sum_{\ell=1}^\infty\bB^{\ell-1}\bA \by_{-\ell}^{(2)}=
\sum_{\ell=1}^\infty \bB^{\ell-1}_*\bd_*+\sum_{\ell=1}^{\infty}\bB_*^{\ell-1}\bA_* \by_{-\ell}^{(2)},
\eeq
with $\btheta_*=(\bd_*,\bA_* ,\bB^{}_*)$. If $\ell^*$ is the smallest integer such that  $\bB^{\ell^*-1}\bA\neq \bB_*^{\ell^*-1}\bA_*$, then
\[
(\bB^{\ell^*-1}\bA- \bB_*^{\ell^*-1}\bA_*)\by_{-\ell^*}^{(2)}=\sum_{\ell=1}^\infty \bB^{\ell-1}\bd-
\sum_{\ell=1}^\infty \bB^{\ell-1}_*\bd_*+\sum_{\ell=\ell^*+1}^\infty(\bB^{\ell-1}\bA -\bB_*^{\ell-1}\bA_*) \by_{-\ell}^{(2)}
\]
and therefore $\by_{-\ell^*}^{(2)}$ is measurable with respect to ${\mathcal F}_{-\ell^*-1}$ which contradicts part (A1) of Assumption \ref{assu1} due to stationarity. Hence  $\bB^{\ell}\bA= \bB_*^{\ell}\bA_*$ for all $\ell\geq 0$ resulting in $\bA=\bA_*$ and consequently $\sum_{\ell=1}^\infty \bB^{\ell-1}\bA =\sum_{\ell=1}^\infty \bB^{\ell-1}_*\bA$, in turn implying that $(\bI-\bB)^{-1}\bA=(\bI-\bB_*)^{-1}\bA$. Thus $\bB=\bB_*$. Since $\bA=\bA_*$ and $\bB=\bB_*$, \eqref{est-pr-3} shows that $\sum_{\ell=1}^\infty \bB^{\ell-1}\bd=\sum_{\ell=1}^\infty \bB^{\ell-1}\bd_*$ and hence $(\bI-\bB)^{-1}\bd=(\bI-\bB)^{-1}\bd_*$, completing the proof of $\bd=\bd_*$. By contradiction, \eqref{est-pr-2} is established. 

The ergodic theorem (cf.\ Breiman, 1968) yields that, for all $\ubtheta\in\bTheta$,
\[
\tilde{Z}_n(\ubtheta)=
\frac{1}{n}\sum_{i=2}^n\big\{\by_i^{(2)}-\tilde{\bs}^{(2)}_{i}(\ubtheta)\big\}^\T\big\{\by_i^{(2)}-\tilde{\bs}^{(2)}_{i}(\ubtheta)\big\}
\to\sum_{m=1}^M\mathbb{E}\left[\la \sigma_0^2(\vare_0^2-1), \varp_m\ra^2\right]+M(\ubtheta)
\qquad\mbox{a.s.}
\]
as $n\to \infty$. Observing that the derivatives of $\tilde{\bs}^{(2)}_{0}$ have a uniformly bounded expected value on $\ubtheta\in \bTheta$, standard arguments (see, for example, the proof of the uniform law of large numbers in Ferguson, 1996) show that
\[
\sup_{\ubtheta\in\bTheta}
\bigg|\tilde{Z}_n(\ubtheta)
-\bigg(\sum_{m=1}^M\mathbb{E}\left[\la \sigma_0^2(\vare_0^2-1),\varp_m\ra^2\right]+M(\ubtheta)\bigg)\bigg|
\to 0\qquad\mbox{a.s.}
\]
Using \eqref{rec-e-2/3} and the assumption that $\|\ubB\|\leq \rho<1$ on $\bTheta$ gives
\begin{align}
\label{fast}
n\sup_{\ubtheta\in\bTheta} 
\big|\tilde{Z}_n(\ubtheta)-\hat{Z}_n(\ubtheta)\big|
= \mathcal{O}(1) \qquad a.s.,
\end{align}
where $\hat{Z}_n(\ubtheta)$ is defined as $\tilde{Z}_n(\ubtheta)$ but using $\hat{\bs}^{(2)}_{i}(\ubtheta)$ in place of $\tilde{\bs}^{(2)}_{i}(\ubtheta)$. Thus,
\beq\label{pf-1}
\sup_{\ubtheta\in\bTheta}
\bigg|\hat{Z}_n(\ubtheta)
-\bigg(\sum_{m=1}^M\mathbb{E}\left[\la \sigma_0^2(\vare_0^2-1),\varp_m\ra^2\right]+M(\ubtheta)\bigg)\bigg|
\to 0\qquad\mbox{a.s.}
\eeq
The results in \eqref{est-pr-2} and \eqref{pf-1} imply the consistency in Theorem \ref{cons-th} via standard arguments.
\end{proof}

\begin{proof}[Proof of Theorem \ref{norm-th}]
Note first that
\beq\label{def0}
\frac{\partial \hat{Z}_n(\hat{\ubtheta}_n)}{\partial\ubtheta}=\bzero.
\eeq
Reasoning as on page 466 of Seber and Lee (2003) yields
\[
\frac{\partial \tilde{Z}_n({\ubtheta})}{\partial\ubtheta}
=-\frac{2}{n}\sum_{i=2}^n
\bigg(\frac{\partial \tilde{\bs}^{(2)}_i(\ubtheta)}{\partial \ubtheta}\bigg)^\T
\big\{\by^{(2)}_i-\tilde{\bs}_i^{(2)}(\ubtheta)\big\}
\]
and
\[
\frac{\partial^2 \tilde{Z}_n({\ubtheta})}{\partial\ubtheta^2}
=\frac{2}{n}\sum_{i=2}^n
\bigg(\frac{\partial \tilde{\bs}^{(2)}_i(\ubtheta)}{\partial \ubtheta}\bigg)^\T
\bigg(\frac{\partial \tilde{\bs}^{(2)}_i(\ubtheta)}{\partial \ubtheta}\bigg)
+\frac{2}{n}\sum_{i=2}^n\bR_i(\ubtheta),
\]
where
\[
\bR_i(\ubtheta)=
\bigg\{\sum_{m=1}^M
\bigg(\frac{\partial^2 \tilde{s}^{(2)}_{i,m}(\ubtheta)}{\partial \utheta_j\partial \utheta_k}\bigg)
\big\{\la y_i^2,\varphi_m \ra-\tilde{s}^{(2)}_{i,m}(\ubtheta)\big\}
\colon j,k = 1,\ldots, 2M^2+M\bigg\},
\]
with $\ubtheta=(\utheta_1, \ldots, \utheta_{2M^2+M})^\T$ and $\tilde{\bs}_i^{(2)}(\ubtheta)=(\tilde{s}_{i,1}^{(2)}(\ubtheta),\ldots ,\tilde{s}_{i,M}^{(2)}(\ubtheta))^\T$.
Let
\[
{\bQ}(\ubtheta)
=\mathbb{E}\bigg[\bigg(\frac{\partial \tilde{\bs}^{(2)}_0(\ubtheta)}{\partial \ubtheta}\bigg)^\T
\bigg(\frac{\partial \tilde{\bs}^{(2)}_0(\ubtheta)}{\partial \ubtheta}\bigg)\bigg]
+\mathbb{E}[\bR_0(\ubtheta)].
\]
It is easy to see that
\begin{align}\label{uni-11}
\bQ(\ubtheta)\;\;\mbox{is coninuous on} \;\;\bTheta\;\;\mbox{and} \;\;\bQ_0=\bQ(\btheta).
\end{align}
Hence the ergodic theorem (see again the proof of the uniform law of large numbers in Ferguson, 1996) implies that
\begin{align}\label{uni-22}
\sup_{\ubtheta\in \bTheta} 
\bigg\| \frac{\partial^2 \tilde{Z}_n({\ubtheta})}{\partial\ubtheta^2} -2\bQ(\ubtheta)\bigg\|\to0
\qquad\mbox{a.s.}
\end{align}
The central limit theorem for orthogonal martingale differences in Billingsley (1968) yields moreover that
\begin{align}\label{uni-3}
\sqrt{n}\frac{1}{2}\frac{\partial \tilde{Z}_n({\btheta})}{\partial\ubtheta}
\stackrel{\mathcal D}{\to}\tilde{\bN},
\end{align}
where $\tilde{\bN}$ is a $2M^2+M$-dimensional normal random vector with $\mathbb{E}[\tilde{\bN}]=\bzero$ and
$\mathbb{E}[\tilde{\bN}\tilde{\bN}^\T]=\bH_0^\T\bJ_0\bH_0$.

Arguing along the lines of the justification of \eqref{fast} one can verify that
\beq\label{fast-1}
n\sup_{\ubtheta\in \bTheta}
\bigg\| \frac{\partial \tilde{Z}_n({\ubtheta})}{\partial\ubtheta}-\frac{\partial \hat{Z}_n({\ubtheta})}{\partial\ubtheta}\bigg\|
=\mathcal{O}(1)\qquad\mbox{a.s.}
\eeq
and
\beq\label{fast-2}
n\sup_{\ubtheta\in \bTheta}
\bigg\| \frac{\partial^2 \tilde{Z}_n({\ubtheta})}{\partial\ubtheta^2}-\frac{\partial^2 \hat{Z}_n({\ubtheta})}{\partial\ubtheta^2} \bigg\|
=\mathcal{O}(1)\qquad\mbox{a.s.}
\eeq
Combining \eqref{def0} with the coordinate-wise mean value theorem for vectors leads to
\begin{align}\label{mvt}
-\frac{\partial \hat{Z}_n({\btheta})}{\partial\ubtheta}=\hat{\bV}_n(\tilde{\ubtheta}_n-\btheta),
\end{align}
with some matrix $\hat{\bV}_n$. From \eqref{uni-11}, \eqref{uni-22} and \eqref{fast-2}, it can be concluded that the matrix $\hat{\bV}_n$ satisfies
\beq\label{conver-1}
\hat{\bV}_n\;\to\;2\bQ_0.
\eeq
Displays \eqref{uni-3} and \eqref{fast-1} also yield that
\begin{align}\label{uni-clt}
\sqrt{n}\frac{1}{2}\frac{\partial \hat{Z}_n({\btheta})}{\partial\ubtheta}
\stackrel{\mathcal D}{\to}\tilde{\bN},
\end{align}
where $\tilde{\bN}$ is defined in \eqref{uni-3}. Putting together \eqref{mvt}--\eqref{uni-clt}, the asymptotic normality of the estimator in Theorem \ref{norm-th} is established.
\end{proof}

\begin{proof}[Proof of Corollary \ref{rem-2/1}]
The approximability established in Corollary \ref{rem-1} implies that $\|\hat{D}_n-D\|\to 0$ in probability and therefore
\[
\max_{1\leq m\leq M}\|\hat{\varphi}_m-\hat{\zeta}_m\varphi_m\|\stackrel{P}{\to}0,
\qquad (n\to\infty),
\]
where the $\hat{\zeta}_m$'s are random signs, assuming that $\lambda_1>\cdots >\lambda_M>\lambda_{M+1}$ are the eigenvalues of $D$ in decreasing order (see Horv\'ath and Kokoszka, 2012). Hence the consistency of  Theorem \ref{cons-th} remains true when the functions in $\hat\Phi_M=\{\hat{\varphi}_1,\ldots,\hat{\varphi}_M\}$ are used to set up \eqref{pare-1}.
\end{proof}


\end{document}